


\documentclass[preprint,3p,authoryear,11pt,fleqn]{elsarticle}
\usepackage{amsmath,amsthm}
\usepackage{amssymb}
\usepackage{dsfont}
\usepackage{amsfonts}
\usepackage{enumerate}
\usepackage{mathrsfs}
\usepackage{multirow}
\usepackage{float}
\usepackage{graphicx}
\usepackage{color}
\newtheorem{applem}{Lemma A\!\!} 
\usepackage{abbreviationsSSC-ext}



\theoremstyle{definition}\newtheorem{remark}{Remark}


\begin{document}
\begin{frontmatter}

\title{Asymptotics of the two-stage spatial sign correlation}

\author[AD]{Alexander D{\"u}rre\corref{ad.cor}}
\author[DV]{Daniel Vogel}
\address[AD]{Fakult\"at Statistik, Technische Universit\"at Dortmund, 44221 Dortmund, Germany}
\address[DV]{Institute for Complex Systems and Mathematical Biology, University of Aberdeen, Aberdeen AB24 3UE, United Kingdom}
\cortext[ad.cor]{corresponding author: alexander.duerre@udo.edu, +49-231-755-4288}

\begin{abstract}
The spatial sign correlation \citep*{duerre:vogel:fried:2015} is a highly robust and easy-to-compute, bivariate correlation estimator based on the spatial sign covariance matrix. 
Since the estimator is inefficient when the marginal scales strongly differ, a two-stage version was proposed. In the first step, the observations are marginally standardized by means of a robust scale estimator, and in the second step, the spatial sign correlation of the thus transformed data set is computed. 
\citet{duerre:vogel:fried:2015} give some evidence that the asymptotic distribution of the two-stage estimator equals that of the spatial sign correlation at equal marginal scales by comparing their influence functions and presenting simulation results, but give no formal proof. In the present paper, we close this gap and establish the asymptotic normality of the two-stage spatial sign correlation and compute its asymptotic variance for elliptical population distributions. We further derive a variance-stabilizing transformation, similar to Fisher's $z$-transform, and numerically compare the small-sample coverage probabilities of several confidence intervals.
\end{abstract}

\begin{keyword}
scale estimator\sep spatial sign correlation\sep spatial sign covariance matrix  

\MSC[2010] 62H12 \sep 62G05 \sep 62G35
\end{keyword}

\end{frontmatter}



\renewcommand{\thefootnote}{\fnsymbol{footnote}}


\section{Introduction}

The spatial sign of $x \in \R^p$ is defined as $s(x)=x/|x|$ for $x \neq 0$ and $s(0)=0$, where $|\cdot|$ denotes the Euclidean norm in $\R^p$. For a $p$-dimensional random variable $X$ with distribution $F$ and $t \in \R^p$, $p \geq 2$, we call
\begin{align*}
	S(F,t)	= E\left(s(X-t)s(X-t)^T\right)
\end{align*}
the spatial sign covariance matrix (SSCM) of the distribution $F$ with location $t$. Furthermore, letting $t_n$ be an estimator for $t$ and $\X_n=(X_1,\ldots,X_n)^T$ an $n \times p$ array, where $X_1,\ldots,X_n$ is a random sample from the distribution $F$, we call  
\be \label{eq:S_n}
	S_n	=	S_n(\X_n,t_n)=\frac{1}{n}\sum_{i=1}^n s(X_i-t_n)s(X_i-t_n)^T
\ee
the empirical spatial sign covariance matrix with location $t_n$.
The term \emph{spatial sign covariance matrix} was coined by \citet*{Visuri2000}. \citet*{duerre:vogel:tyler:2014} showed consistency  and asymptotic normality of $S_n$ under mild conditions on $F$ and $t_n$. The estimator $S_n$ has excellent robustness properties. Its influence function is bounded, and its asymptotic breakdown point attains the optimal value of 1/2 \citep{Croux2010}.

We will assume below that $X_1,X_2,\ldots$ follow a continuous elliptical distribution, i.e., $F$ has a Lebesgue density $f$ of the form 
\begin{align*}
	f(x)=\det(V)^{-\frac{1}{2}}g((x-\mu)^TV^{-1}(x-\mu))
\end{align*} 
for a location parameter $\mu \in \R^p$ and a symmetric, positive definite shape matrix $V\in \R^{p\times p}$. We denote the class of continuous elliptical distributions with these parameters by $\Ee_p(\mu,V)$.
The matrix $V$ is called the \emph{shape matrix} of $F$ since it describes the shape of the elliptical contour lines of the density.  The function $g:[0,\infty) \rightarrow [0,\infty)$ is called the \emph{elliptical generator} of $F$. 
The specification of $V$ is unique only up to a multiplicative constant, and therefore $V$ is often 
normalized, e.g., by setting $\det(V)=1$ \citep{Paindaveine2008, Frahm2009}. Since we study scale-free aspects of $F$, where the overall scale is irrelevant, it is more convenient to not fix the scale of the shape. We understand the \emph{shape} of an elliptical distributions as an equivalence class of positive definite matrices being proportional to each other.

There is, up to scale, a one-to-one connection between $S(F,\mu)$ and the parameter $V$: both share the same eigenvectors and the ordering of the respective eigenvalues. 
This makes the spatial sign covariance matrix particularly popular for robust principle component analysis \citep[e.g.][]{marden1999, locantore1999, croux2002, gervini2008}. 
However, the map between the eigenvalues of $V$ and $S(F,\mu)$ is only known explicitly for $p=2$ \citep{Croux2010,vogel2008}. Making use of this result, \citet{duerre:vogel:fried:2015} proposed a robust correlation estimator, called the \emph{spatial sign correlation}. 
Let
\[
	\begin{pmatrix}\hat{s}_{11}&\hat{s}_{12}\\
	\hat{s}_ {21}&\hat{s}_{22}\end{pmatrix}	
	=	S_n(\X_n,t_n)
\]
denote the entries of $S_n(\X_n,t_n$).
Then the generalized correlation coefficient%
\footnote{We call $\rho = v_{12}/\sqrt{v_{11}v_{22}}$ the \emph{generalized correlation coefficient} of the bivariate elliptical distribution $F$. It is defined without any moments assumptions and coincides with the usual product moment correlation if second moments are finite.}
$\rho = v_{12}/\sqrt{v_{11}v_{22}}$ can be estimated by
\be \label{eq:rho_n}
	\hat\rho_n = \frac{c\hat{s}_{12}b}{\sqrt{(\hat{s}_{12}^2+b^2)^2+(\hat{s}_{12}cb)^2}},
\ee
where
\begin{align*}
	b = d-\hat{s}_{11}, \quad
	c = \frac{2d-1}{d(1-d)},\quad
	d = \frac{1}{2}+\sqrt{(\hat{s}_{11}-1/2)^2+\hat{s}_{12}^2}.
\end{align*}
For a derivation of this estimator, see \cite{duerre:vogel:fried:2015}. There it is also shown that the spatial sign correlation $\hat{\rho}_n$ is consistent for $\rho$ under ellipticity and asymptotically normal with asymptotic variance
\be \label{eq:asv.cor1}
	ASV(\hat\rho_n)=(1-\rho^2)^2+ \frac{1}{2}(a+a^{-1})(1-\rho^2)^{3/2}
\ee
where $a=\sqrt{v_{11}/v_{22}}$. 
The asymptotic variance apparently is minimal for $a = 1$, i.e., for equal marginal scales, but it can get arbitrarily large as $a$ approaches $\infty$ or 0. Since our aim is to estimate the correlation coefficient, which is invariant under marginal scale changes, it is therefore reasonable to standardize the data marginally before computing the spatial sign correlation.

Let $\sigma(\cdot)$ denote a univariate \emph{scale measure} or \emph{dispersion measure}, i.e., for any univariate distribution $G$ it satisfies
\be \label{eq:scale.measure}
	\sigma(G^\star_{\alpha,\beta}) = |\alpha|\,\sigma(G)   	
	\qquad \mbox{ for all }  \alpha, \beta \in \R,
\ee  
where $G^\star_{\alpha,\beta}$ is the distribution of $Y^\star = \alpha Y + \beta$ for $Y \sim G$. This may be the standard deviation $\sigma_{SD} = \{E(Y-EY)^2\}^{1/2}$, but since the main purpose of studying spatial sign methods is their robustness, robust measures like the median absolute deviation $\sigma_{\rm MAD} = {\rm median}|Y-{\rm median}(Y)|$ or the $Q_n$ scale measure $\sigma_{Q_n} = q_{1/4}(|Y-Y'|)$ \citep{RousseeuwCroux1993} may be more appropriate. Here, $Y'$ is an independent copy of $Y$, and ${\rm median}(Y)$ denotes the median of the distribution of $Y$ and $q_{1/4}(Y)$ its $1/4$th quantile. 
Let further $\hat\sigma_n = \hat\sigma_n(\Y_n)$ denote the respective \emph{scale estimator}, which is, in principle, the measure $\sigma(\cdot)$ applied to the empirical distribution associated with the univariate sample $\Y_n = (Y_1,\ldots, Y_n)^T$. But in many situations, the empirical version of the scale measure is defined slightly differently due to various reasons, e.g., the empirical standard deviation is usually defined as $\hat\sigma_n(\Y_n) = \{ (n-1)^{-1}\sum_{i=1}^n (Y_i - \bar{Y}_n)^2 \}^{1/2}$ instead of  $\hat\sigma_n(\Y_n) = \{ n^{-1}\sum_{i=1}^n (Y_i - \bar{Y}_n)^2 \}^{1/2}$.

Returning to the general $p$-dimensional set-up, for any specific choice of $\sigma(\cdot)$, let $F_i$ denote the $i$th margin of $F$,
 further $\sigma_i = \sigma(F_i)$ and $\hat\sigma_{i,n} = \hat\sigma_n(\X_n^{(i)})$, where $\X_n^{(i)}$ is the $i$th column of $\X_n$, $1 \le i \le p$. Let
\[
	A = 
	\begin{pmatrix}
		\sigma_1^{-1}	&					&	0		\\
												&	\ddots	&			\\
		0										&					&	\sigma_p^{-1}
	\end{pmatrix}, 
	\qquad
	A_n = 
	\begin{pmatrix}
		\hat{\sigma}_{1,n}^{-1}	&					&	0		\\
												&	\ddots	&			\\
		0										&					&	\hat{\sigma}_{p,n}^{-1}
	\end{pmatrix}. 
\]
Then we define the \emph{two-stage spatial sign  covariance matrix} as
\be \label{eq:tilde.S_n}
	\tilde{S}(\X_n,t_n(\cdot), A_n) = S_n(\X_nA_n, t_n(\cdot))
	=
	\frac{1}{n}\sum_{i=1}^n
	s(A_n X_i-t_n(\X_nA_n))s(A_n X_i-t_n(\X_nA_n))^T,
\ee
and the \emph{two-stage spatial sign correlation} $\hat\rho_{\sigma,n}$ (of the sample $\X_n$ with location $t_n(\cdot)$ and inverse scales $A_n$) as the spatial sign correlation $\hat\rho_n$, cf.~(\ref{eq:rho_n}), being applied to $\tilde{S}_n(\X_n,t_n(\cdot),A_n)$ instead of $S_n(\X_n,t_n)$.
%
%
%
\begin{remark} \label{rem:1} \mbox{ \\ } 
\begin{enumerate}[(I)]
	\item 
There is a subtle but important difference in the role that $t_n$ plays in (\ref{eq:S_n}) and in (\ref{eq:tilde.S_n}).
When defining $S_n(\X_n,t_n)$, the location $t_n$ may generally be any random vector, which may or may not bear a connection to the sample $\X_n$. But usually, we take it to be an estimator computed from the data, i.e., it is a function of $\X_n$. Whenever we want to invoke this latter meaning, we write $t_n(\cdot)$ instead of $t_n$, particularly so in the definition of $\tilde{S}_n(\X_n,t_n(\cdot),A_n)$. Here it is essential that $t_n(\cdot)$ is applied to the transformed data $\X_n A_n$. This will become important at a later point when we consider different location estimates for the transformed data, cf.\ e.g.\ Condition C\ref{c5} of Theorem \ref{th:1} below.

\item \label{rem:1.num:1}
In the definition of the two-stage spatial sign covariance matrix $\tilde{S}_n(\X_n,t_n(\cdot),A_n)$, the data  are \emph{first} standardized marginally, and \emph{then} the location is estimated from the transformed data. For all marginally equivariant location estimators -- and this is vast majority -- the order of these two-steps is irrelevant.
We call a multivariate location estimator $t_n$  \emph{marginally equivariant} if it satisfies $t_n(\X_n A + b) = A t_n(\X_n) + b$ for any $p\times p$ diagonal matrix $A$ 
and $b \in \R^p$. All location estimators being composed of univariate, affine equivariant location estimators are marginally equivariant, but so are also 
all multivariate, affine equivariant location estimators, including elliptical maximum likelihood estimators, $M$-estimators \citep{Maronna1976, Tyler1987}, $S$-estimators \citep{Davies1987}, or constrained $M$-estimators \citep{Kent1996}.
However, there is one prominent example which lacks this property: the spatial median \citep[e.g.][Section 6.2]{oja:2010}. We want to include this estimator since, due to its conceptual similarity to the SSCM, it may be regarded as a default choice for $t_n$. The spatial median has a variety of good properties such as uniqueness and computational and statistical efficiency, see e.g.\ \citet{magyar:2011} and the references therein. Likewise to the spatial sign covariance matrix, the spatial median is inefficient at strongly shaped distributions. Thus, when using the spatial median as location estimate, it is therefore, from a conceptual point of view, reasonable to compute it from the marginally standardized data. This is the reason for choosing the order of steps as we do here: first standardization, then location estimation. However,  in practical situations, the difference to the estimator obtained when reversing the order of these two steps tends to be rather small -- also in case of the spatial median.

\item
Finally we would like stress that we deliberately avoid any reference to the covariance matrix of $F$. Our whole discussion of scale and correlation is completely moment-free.
We understand \emph{correlation} generally as \emph{monotone dependence}, with the moment-based Pearson correlation coefficient being one, and with no doubt the most popular, way of mathematically quantifying this notion. Our main focus here is on estimating the generalized correlation coefficient $\rho$ within the semiparametric model of elliptical distributions, but the concept of spatial sign correlation can also be employed for defining a general, moment-free measure of correlation. Requiring no moment assumptions is one major strength of spatial sign methods.
\end{enumerate}
\end{remark}
Following the introduction, the article has two further sections: Section 2 \emph{Asymptotic results} and Section 3 \emph{Simulations}.
The main result of the paper (Theorem \ref{th:2}) states that, at elliptical distributions, the asymptotic variance of the two-stage spatial sign correlation is
\[
	ASV(\hat\rho_{\sigma,n}) = (1-\rho^2)^2+(1-\rho^2)^{3/2},
\]
which is shown by establishing the asymptotic equivalence of $\hat\rho_{\sigma,n}$ to the spatial sign correlation at distributions with equal marginal scale. This was conjectured by \citet{duerre:vogel:fried:2015}, who compare the corresponding influence functions. Towards this end, we investigate the asymptotics of the two-stage spatial sign covariance matrix (Theorem \ref{th:1}).
With the asymptotic distribution of $\hat\rho_{\sigma,n}$ taking on a rather simple form, only depending on $\rho$, one can derive a variance-stabilizing transformation analogous to Fisher's z-transform. This is the content of Corollary \ref{cor:z-trafo}.
In Section 3, we numerically compare confidence intervals for $\rho$ based on the moment correlation and the spatial sign correlation, both with and without variance-stabilizing transformation. All proofs are deferred to the Appendix.


\section{Asymptotic results}

The first result concerns the asymptotic difference between $S_n(\X_n A_n,t_n)$, the sample two-stage SSCM with estimated location and scales, and $S_n(\X_nA, A t)$, the sample two-stage SSCM with known location and scales. We use the notation $X^{(j)}$ to denote the $j$th component of the $p$-dimensional random vector $X$, $j = 1, \ldots, p$, likewise for other vectors.
\begin{theorem}\label{th:1}
Let $t \in \R^p$ and $X$ 
be a $p$-variate random vector with continuous distribution $F$ satisfying
\begin{enumerate}[(C1)]
\item \label{c1} $E|X - t|^{-3/2} < \infty$, 
\item \label{c2} $E\left\{ \frac{X-t}{|X-t|^2} \right\} =0$ \ and \
			$E\left\{ \frac{ (X-t)^{(i)} (X-t)^{(j)} (X-t)^{(k)} }{|X-t|^4 } \right\} =0$ \ for \  $i,j,k=1,\ldots,p$.
\end{enumerate}
Let further $A$ be a $p\times p$ diagonal matrix with positive diagonal entries $a_1, \ldots, a_p$, and $A_n$ a series of random $p\times p$ diagonal matrices satisfying
\begin{enumerate}[(C1)]
\setcounter{enumi}{2} 
\item \label{c3}
$\sqrt{n}(A_n - A) \cid Z = \diag(Z_1,\dots,Z_p)$
\end{enumerate}
for some random diagonal matrix $Z$. Finally, let $\X_n = (X_1, \ldots, X_n)^T$ be an iid sample drawn from $F$ and $t_n(\cdot)$ a series of $p$-variate estimators satisfying
\begin{enumerate}[(C1)]
\setcounter{enumi}{3} 
\item \label{c4} $\sqrt{n} \{t_n(\X_n) - t\} = O_P(1)$,
\item \label{c5}  $\sqrt{n} \{t_n(\X_n A_n) - A_n t_n(\X_n)\} = O_P(1)$.
\end{enumerate}
Then \ $\sqrt{n} \{ S_n(\X_n A_n, t_n(\cdot)) - S_n(\X_n A, A t) \} \cid \Xi_p$ \ as $n \to \infty$ with
\be \label{eq:xi} 
	\Xi_p =  A^{-1} Z S(F_0, 0) + S(F_0, 0) A^{-1} Z - 2 \sum_{j=1}^p  (Z_j/a_j) \Gamma_j, 
\ee
where $F_0$ is the distribution of $X_0 = A(X-t)$ and
\[
	\Gamma_j = E \left[ (X_0^{(j)})^2  \frac{X_0 X_0^T}{\{X_0^T X_0\}^2} \right].
\]
\end{theorem}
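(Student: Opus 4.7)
My plan is to Taylor-expand the map $g(x) := s(x) s(x)^T = x x^T/|x|^2$ around the unperturbed argument $W_i := A(X_i - t)$, isolate the leading stochastic term driven by the scale estimation error $A_n - A$, and control the linearization remainder via (C1).

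The first step is to eliminate the location error. Writing $\tilde Y_i = A_n X_i - t_n(\X_n A_n)$, conditions (C5), (C4), (C3) yield, in succession,
\begin{equation*}
t_n(\X_n A_n) = A_n t_n(\X_n) + O_P(n^{-1/2}) = A_n t + O_P(n^{-1/2}) = A t + (A_n - A) t + O_P(n^{-1/2}),
\end{equation*}
so that $\tilde Y_i - W_i = (A_n - A)(X_i - t) + r_n = A^{-1}(A_n - A) W_i + r_n$ with $r_n = O_P(n^{-1/2})$ not depending on $i$. Thus the location step survives only through a common, non-random-in-$i$ perturbation $r_n$.

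Next, a direct calculation gives
\begin{equation*}
g(x + \delta) - g(x) = \frac{x \delta^T + \delta x^T}{|x|^2} - 2 \frac{x^T \delta}{|x|^4}\, x x^T + R(x, \delta),
\end{equation*}
with the uniform bound $\|R(x, \delta)\| \le C\,(|\delta|/|x|)^{3/2}$, obtained by interpolating the second-order Taylor estimate $\|R\| \le C (|\delta|/|x|)^2$ (valid for $|\delta| \le |x|/2$) against the trivial constant bound (for $|\delta| > |x|/2$). Inserting $x = W_i$, $\delta_i = A^{-1}(A_n - A) W_i + r_n$, summing, and multiplying by $\sqrt n$, the $(A_n - A)$-part of the linear piece becomes
\begin{equation*}
\sqrt n (A_n A^{-1} - I)\, \hat S_n + \hat S_n\, \sqrt n (A_n A^{-1} - I) - 2 \sum_{j=1}^p \frac{\sqrt n (a_{n,j} - a_j)}{a_j}\, \hat\Gamma_{j,n},
\end{equation*}
where $\hat S_n = n^{-1}\sum_i W_i W_i^T/|W_i|^2 \to S(F_0, 0)$ and $\hat\Gamma_{j,n} = n^{-1}\sum_i (W_i^{(j)})^2 W_i W_i^T/|W_i|^4 \to \Gamma_j$ by the law of large numbers; combined with (C3) and Slutsky, this converges in distribution to $\Xi_p$. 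The $r_n$-part of the linear piece equals $\sqrt n\, r_n = O_P(1)$ multiplied by empirical averages of $W_i/|W_i|^2$ and $W_i^{(k)} W_i W_i^T / |W_i|^4$, whose limits $E[W/|W|^2]$ and $E[W^{(k)} W W^T / |W|^4]$ vanish by the extension of (C2) to $F_0$, inherited from $F$ through the central symmetry that underlies (C2).

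The one genuine obstacle is showing $n^{-1/2} \sum_i \|R(W_i, \delta_i)\| = o_P(1)$. Applying the $3/2$-bound together with $|\delta_i| \le \|A_n - A\|\,|X_i - t| + |r_n|$, the subadditivity $(a+b)^{3/2} \le \sqrt 2 (a^{3/2} + b^{3/2})$, and $|W_i| \ge (\min_j a_j)\,|X_i - t|$, one obtains
\begin{equation*}
\frac{1}{\sqrt n}\sum_i \|R(W_i, \delta_i)\| \le C_1\,\sqrt n\, \|A_n - A\|^{3/2} + C_2\,\sqrt n\, |r_n|^{3/2} \cdot \frac{1}{n}\sum_i |X_i - t|^{-3/2},
\end{equation*}
and both summands are $O_P(n^{-1/4})$: the first by (C3), the second because $r_n = O_P(n^{-1/2})$ combined with $n^{-1}\sum_i |X_i - t|^{-3/2} = O_P(1)$ from (C1). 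This matching of exponents is precisely why the moment condition (C1) is posed with the power $-3/2$.
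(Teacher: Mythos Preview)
Your argument is correct and takes a genuinely different route from the paper's. The paper proceeds by a long algebraic telescoping: it first splits off the location-estimation effect (their $\mathcal{T}_2$) and dispatches it by quoting Theorem~2 of D\"urre, Vogel and Tyler (2014) as a black box; the remaining scale-estimation effect ($\mathcal{T}_1$) is then broken into five further pieces ($\mathcal{T}_{1,a}$, $\mathcal{L}$, $\mathcal{S}_1$--$\mathcal{S}_4$), each handled by direct bounds, with a separately proved auxiliary lemma (their Lemma~A1, using Marcinkiewicz's strong law) needed to replace the estimated location by the true one inside the empirical $\Gamma_j$'s. You instead package everything into a single first-order Taylor expansion of $g(x)=xx^T/|x|^2$ around $W_i$, with an interpolated $3/2$-power remainder bound. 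This is more self-contained (no external citation, no auxiliary lemma) and has the virtue of making the exponent $-3/2$ in (C1) transparent: it is exactly what forces $\sqrt{n}\,|r_n|^{3/2}\cdot n^{-1}\sum_i |X_i-t|^{-3/2}=O_P(n^{-1/4})$. The paper's route, by contrast, avoids having to justify the interpolated remainder estimate and keeps each step purely algebraic. One minor caveat shared by both arguments: condition (C2) is stated for $X-t$, but your $r_n$-step and the paper's invocation of the D\"urre--Vogel--Tyler result both actually require the analogous vanishing odd moments for $W=A(X-t)$; your phrase ``inherited through the central symmetry that underlies (C2)'' is the implicit reading the paper also adopts, though strictly speaking (C2) as written does not entail its $F_0$-counterpart without that underlying sign symmetry.
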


Theorem \ref{th:1} apparently has a long list of technical conditions. They are due to the fact that it is formulated under very broad conditions. We do not assume any specific model for the distribution $F$. Also, the location estimator $t_n(\cdot)$, the scale estimator $A_n$ and even the location $t$ are unspecified. The above conditions are indeed a set of easy-to-verify regularity conditions, which are met in all relevant situations, and many of which may be further relaxed for the price of more involved technical derivations. We will review them one by one below.
\begin{description}
\item[Condition (C\ref{c1})]
requires the probability mass of $F$ to be not too strongly concentrated around $t$. For instance, if $F$ possesses a Lebesgue density $f$, it is sufficient (but not necessary) that $f$ is bounded at $t$. This condition also appears in Theorems 2 and 3 of \citet{duerre:vogel:tyler:2014} and is, loosely speaking, due to the discontinuity of the spatial sign function at the origin. 
\item[Condition (C\ref{c2})]
is indeed a somewhat restrictive condition as it basically imposes component\-wise symmetry of $F$ around $t$. It is, however, a mere convenience assumption, it can be dropped in favor of an additional term in (\ref{eq:xi}) and a slightly stronger formulation of the other conditions (basically joint convergence of $S_n$, $t_n$ and $A_n$). The proof of the more general version runs analogously, with the main difference that \citet[][Theorem 3]{duerre:vogel:tyler:2014} instead of \citet[][Theorem 2]{duerre:vogel:tyler:2014} would be used. However, our central result, Theorem \ref{cor:1} below, concerns elliptical distributions, for which (C\ref{c2}) is fulfilled. We therefore consider it appropriate to include this symmetry condition here for the sake of simpler conditions and a clearer exposition.
\item[Condition (C\ref{c3})]
is satisfied, e.g., if $A_n^{-2}$ is taken to be the diagonal of some $p \times p$ scatter matrix estimator for which asymptotic normality has been shown. But also if $A_n^{-1}$ is composed of univariate scale estimators (the default case here due to computational reasonability), it is usually true. Specifically, if the univariate scale estimator $\hat\sigma_{j,n}$ allows a linearization, i.e., 
\be \label{eq:linearization}
	\hat{\sigma}_{j,n} = \frac{1}{n}\sum_{i=1}^n f_j (X^{(j)}_i) + o_p(n^{-1/2}),
	\qquad j =1,\ldots,p,
\ee
with $E \{f_j(X^{(j)})^2 \}< \infty$, then $\sqrt{n}\{(\hat\sigma_{1,n},\ldots,\hat\sigma_{p,n}) - (\sigma_1,\ldots,\sigma_p)\}^T
= \sqrt{n}\diag(A_n^{-1}-A^{-1})$ converges to a multivariate normal distribution, and then so does $\sqrt{n}(A_n-A)$. Note that, since $A$ and $A_n$ are diagonal matrices, $\sqrt{n}(A_n^{-1}-A^{-1}) \cid \tilde{Z}$ implies $\sqrt{n}(A_n - A) = A A_n \sqrt{n} (A^{-1} - A_n^{-1}) \cid - A^2 \tilde{Z}$, and hence $Z = -A^2\tilde{Z}$ in distribution.

All estimators of practical relevance allow a linearization (\ref{eq:linearization}). For instance, for quantile-based estimators, such as the MAD, this linearization is provided by the Bahadur representation \citep[]{Bahadur1966, kiefer:1967, ghosh:1971, sen:1968}. In the case of $U$-statistics, such as Gini's mean difference, it is given by the Hoeffding decomposition \citep{hoeffding:1948}, and in the case of $U$-quantiles, such as the $Q_n$ scale estimator \citep{RousseeuwCroux1993}, by a combination of the two \citep{serfling:1984,Wendler2011}.

\item[Condition (C\ref{c4}):]
This is a minimal standard assumption.
\item[Condition (C\ref{c5})]
is trivially fulfilled for any marginally equivariant location estimator, see Remark \ref{rem:1} (\ref{rem:1.num:1}).
Primarily, this condition is necessary because we want to include the spatial median as potential location estimator, and, for efficiency reasons, propose to standardize the data prior to computing its spatial median (instead of scaling the spatial median along with the data).  Under (C\ref{c3}), the spatial median satisfies (C\ref{c5}) at elliptical distributions \citep{nevalainen:larocque:oja:2007}. 
\end{description}
Finally, the continuity of $F$ also is a mere convenience assumption, which prohibits that several data points coincide with each other, and thus ensures that $t_n$ coincides with at most one observation. Alternative assumptions are discussed also in \citet{duerre:vogel:tyler:2014}.

In case of $F$ being an elliptical distribution and $t$ its symmetry center, explicit expressions for $S(F,t)$ appear to be known only for $p = 2$. In this case, $\Xi_p$ in (\ref{eq:xi}) considerably simplifies.

\begin{corollary} \label{cor:1}
Let $p =2$ and $X \sim  F \in \Ee_2(t,V)$. Let $A = \diag(a_1,a_2)$ be a $2\times 2$ diagonal matrix with positive diagonal entries such that $V_0 = A V A$ has equal diagonal entries. Let further $Z = \diag(Z_1,Z_2)$ be a random $2 \times 2$ diagonal matrix. Then 
$\Xi_2$ from Theorem \ref{th:1} is
\[
	\Xi_2 = 
	\begin{pmatrix}
	Z_1/a_1 - Z_2/a_2 	&		0								\\
	0										&   Z_2/a_2 - Z_1/a_1 \\
	\end{pmatrix}
	\zeta,
\]
where $\zeta =(1-\sqrt{1-\rho^2})/(2\rho^2)$ if $\rho \neq 0$ and $\zeta = 1/4$ if $\rho=0$, 
and $\rho = v_{12} (v_{11} v_{22})^{-1/2}$.
\end{corollary}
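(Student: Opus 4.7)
The strategy is to start from the expression for $\Xi_p$ in Theorem \ref{th:1} specialized to $p=2$ and explicitly compute the three ingredients: $S(F_0,0)$, $\Gamma_1$ and $\Gamma_2$. Writing $X_0 = A(X-t) \sim F_0 \in \Ee_2(0,V_0)$ with $V_0 = AVA$, and noting that $S(F_0,0)$, $\Gamma_j$ and the generalized correlation all depend on $V_0$ only through its shape, I would rescale and assume $V_0 = \left(\begin{smallmatrix}1 & \rho \\ \rho & 1\end{smallmatrix}\right)$ without loss of generality.

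For $S(F_0,0)$, the explicit eigenvalue formula for the bivariate elliptical SSCM \citep{Croux2010,vogel2008} gives that $S(F_0,0)$ shares the eigenvectors of $V_0$, namely $(1,\pm1)^T/\sqrt{2}$, with eigenvalues $\sqrt{1\pm\rho}/(\sqrt{1+\rho}+\sqrt{1-\rho})$. A brief simplification yields diagonal entries $1/2$ and off-diagonal entry $(1-\sqrt{1-\rho^2})/(2\rho) = \rho\zeta$.

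For $\Gamma_1$ and $\Gamma_2$, the coordinate-swap symmetry $JV_0J = V_0$, where $J$ is the $2\times 2$ permutation matrix, implies $F_0$ is invariant under $X_0 \mapsto JX_0$ and hence $\Gamma_2 = J\Gamma_1 J$. Only three numbers remain: $\alpha = (\Gamma_1)_{11}$, $\beta = (\Gamma_1)_{22}$, $\gamma = (\Gamma_1)_{12}$. The elementary identity $\alpha + \beta = E[(X_0^{(1)})^2/|X_0|^2] = [S(F_0,0)]_{11} = 1/2$ eliminates $\alpha$. To evaluate the remaining two I would diagonalize $V_0$ and parametrize
\[
    X_0 = \tfrac{R}{\sqrt 2}\bigl(\sqrt{1+\rho}\cos\phi + \sqrt{1-\rho}\sin\phi,\ \sqrt{1+\rho}\cos\phi - \sqrt{1-\rho}\sin\phi\bigr)^T,
\]
with $\phi$ uniform on $[0,2\pi)$ independent of a nonnegative radial variable $R$. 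The factor $R$ cancels in both integrands, odd powers of $\sin 2\phi$ vanish by $\phi \mapsto -\phi$, and both expectations reduce to integrals of the form $(2\pi)^{-1}\int_0^{2\pi}(\cos\psi)^k(1+\rho\cos\psi)^{-j}d\psi$. These follow from the standard evaluation $(2\pi)^{-1}\int_0^{2\pi}(1+\rho\cos\psi)^{-1}d\psi = (1-\rho^2)^{-1/2}$ via differentiation in $\rho$ and the algebraic identity $\cos\psi = \rho^{-1}[(1+\rho\cos\psi)-1]$. The expected outcome is $\beta = \zeta/2$ and $\gamma = \rho\zeta/2$, hence $\alpha = (1-\zeta)/2$.

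Plugging back into $\Xi_2 = A^{-1}ZS(F_0,0) + S(F_0,0)A^{-1}Z - 2\sum_{j=1}^2 (Z_j/a_j)\Gamma_j$, the off-diagonal entry collapses to $([S(F_0,0)]_{12} - 2\gamma)(Z_1/a_1 + Z_2/a_2) = 0$, the $(1,1)$ entry becomes $(Z_1/a_1)(1-2\alpha) - 2(Z_2/a_2)\beta = (Z_1/a_1 - Z_2/a_2)\zeta$, and the $(2,2)$ entry is its negative, matching the claim. The $\rho = 0$ case follows by continuity ($\zeta \to 1/4$) or directly from spherical symmetry. The main technical obstacle will be the evaluation of $\beta$, which requires the second-order trigonometric integral $(2\pi)^{-1}\int_0^{2\pi}(1+\rho\cos\psi)^{-2}d\psi = (1-\rho^2)^{-3/2}$; the computation of $\gamma$ is easier thanks to the telescoping identity $(\rho+\cos\psi)/(1+\rho\cos\psi) = \rho^{-1} - \rho^{-1}(1-\rho^2)/(1+\rho\cos\psi)$.
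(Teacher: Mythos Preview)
Your proposal is correct and arrives at exactly the values the paper obtains, but the route to $\Gamma_1$ and $\Gamma_2$ is genuinely different. The paper does not compute these matrices from scratch: it observes that $\Gamma_1$ and $\Gamma_2$ are, respectively, the upper-left and lower-right $2\times 2$ blocks of the $4\times 4$ matrix
\[
	W = E\Bigl[\vec\bigl\{X_0X_0^T/(X_0^TX_0)\bigr\}\,\vec\bigl\{X_0X_0^T/(X_0^TX_0)\bigr\}^T\Bigr],
\]
and then simply imports a closed-form expression for $W$ (in terms of the eigenvalues of $V_0$) from the proof of Proposition~2(3) in \citet{duerre:vogel:fried:2015}. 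After conjugating by $U\otimes U$ and specializing to $\lambda_1 = c(1-\rho)$, $\lambda_2 = c(1+\rho)$, the entries of $W$ drop out immediately, and the final assembly of $\Xi_2$ is identical to yours. Your approach, by contrast, is self-contained: the swap symmetry $\Gamma_2 = J\Gamma_1 J$ and the constraint $(\Gamma_1)_{11}+(\Gamma_1)_{22} = [S(F_0,0)]_{11} = 1/2$ reduce the problem to two trigonometric integrals in the uniform angle, which you then evaluate by differentiating the standard identity $(2\pi)^{-1}\int_0^{2\pi}(1+\rho\cos\psi)^{-1}\,d\psi = (1-\rho^2)^{-1/2}$. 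This buys independence from the earlier paper at the cost of a short calculus exercise; the paper's route is quicker but leans on an external computation. Either way the numbers agree (after matching notation: your $(\alpha,\beta,\gamma)$ correspond to the paper's $(\alpha,\gamma,\beta)$), and the final cancellation in the off-diagonal entry and the simplification of the diagonal entries to $(Z_1/a_1 - Z_2/a_2)\zeta$ are handled identically.
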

An important implication of Corollary \ref{cor:1} is that, at elliptical population distributions, the asymptotic distribution of the off-diagonal element of the two-dimensional two-stage SSCM is the same as that of the off-diagonal element of the ordinary SSCM at the corresponding distribution with equal marginal scales. Building on this observation, we can derive the asymptotic distribution of the two-stage spatial sign correlation by means of a generalized version of the delta method.
\begin{theorem} \label{th:2}
Let $p = 2$ and $X \sim F \in \Ee_2(t,V)$ satisfy Condition C\ref{c1} of Theorem \ref{th:1}. Let $\X_n$, $A$, $A_n$ and $t_n(\cdot)$ be as in Theorem \ref{th:1}, satisfying Conditions C\ref{c3}, C\ref{c4} and C\ref{c5}, with the further property that $V_0 = A V A$ has equal diagonal entries. Then
\be \label{eq:asv.cor2}
	\sqrt{n}(\hat{\rho}_{\sigma,n}-\rho)\  \cid\   N\big(0,(1-\rho^2)^2+(1-\rho^2)^{3/2}\big).
\ee
\end{theorem}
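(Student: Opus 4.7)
The plan is to combine Theorem~\ref{th:1} and Corollary~\ref{cor:1} with a delta-method argument applied to the map $g:\R^2 \to \R$ implicitly defined by (\ref{eq:rho_n}), so that $\hat\rho_{\sigma,n} = g(\tilde{s}_{11}, \tilde{s}_{12})$. Let $\bar{S}_n := S_n(\X_n A, A t)$ denote the ordinary SSCM of the transformed sample $A X_1, \ldots, A X_n$ at the known location $A t$. Since $V_0 = AVA$ has equal diagonal entries, the marginally standardized population $F_0 \in \Ee_2(0,V_0)$ has equal marginal scales, whence $S_0 := S(F_0,0)$ has diagonal entries $1/2$ by symmetry; denote its off-diagonal entry by $s^\ast$.

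The first step is to invoke Theorem~\ref{th:1} and Corollary~\ref{cor:1} to conclude that $\sqrt{n}(\tilde{S}_n - \bar{S}_n) \cid \Xi_2$, where the decisive feature is that the $(1,2)$-entry of $\Xi_2$ vanishes, so in particular $\sqrt{n}(\tilde{s}_{12} - \bar{s}_{12}) \cid 0$. Combined with the classical asymptotic normality of $\sqrt{n}(\bar{S}_n - S_0)$ \citep{duerre:vogel:tyler:2014}, this yields joint convergence $\sqrt{n}(\tilde{S}_n - S_0) \cid U + \Xi_2$ for a centered Gaussian $U$, and in particular $\sqrt{n}(\tilde{s}_{11} - \tfrac{1}{2}) = O_P(1)$.

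Next, a Taylor expansion of $g$ around $(1/2, s^\ast)$ gives
\[
	\sqrt{n}(\hat\rho_{\sigma,n} - \rho) = \frac{\partial g}{\partial s_{11}}(\tfrac{1}{2}, s^\ast)\, \sqrt{n}(\tilde{s}_{11} - \tfrac{1}{2}) + \frac{\partial g}{\partial s_{12}}(\tfrac{1}{2}, s^\ast)\, \sqrt{n}(\tilde{s}_{12} - s^\ast) + o_P(1).
\]
The central algebraic observation is that $\partial g / \partial s_{11}|_{(1/2,s^\ast)} = 0$, which I would verify by direct differentiation of (\ref{eq:rho_n}): at $s_{11} = 1/2$ one has $d = 1/2 + |s_{12}|$, $b = |s_{12}|$, $\partial d/\partial s_{11} = 0$, and $\partial b/\partial s_{11} = -1$, and the quotient rule then reduces the derivative to two explicit terms that cancel exactly. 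Conceptually, this reflects that a perturbation of the SSCM diagonal in the direction $(1,-1)$ corresponds to a perturbation of $V$ in the direction $v_{11}-v_{22}$ at $v_{11} = v_{22}$, under which $\rho = v_{12}/\sqrt{v_{11}v_{22}}$ is stationary (it is even in that direction by the $v_{11} \leftrightarrow v_{22}$ symmetry).

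With this identity, $\sqrt{n}(\hat\rho_{\sigma,n} - \rho)$ has the same limit as the second term in the expansion, which is precisely the limit of $\sqrt{n}(\hat\rho_n(\bar{S}_n) - \rho)$; since $\bar{S}_n$ is the SSCM of an iid sample from $F_0$ with equal marginal scales, the variance formula (\ref{eq:asv.cor1}) with $a = 1$ delivers $(1-\rho^2)^2 + (1-\rho^2)^{3/2}$. The main obstacle is the algebraic identity $\partial g/\partial s_{11}|_{(1/2,s^\ast)} = 0$; once that is in hand the rest is a routine delta method. A minor technical point is the boundary case $\rho = 0$, where $s^\ast = 0$ and $|s_{12}|$ in (\ref{eq:rho_n}) is non-smooth; this can be handled by continuity of the limit variance in $\rho$, or by a direct $n^{-1/2}$-rate bound on $\hat\rho_{\sigma,n}$ combined with Slutsky.
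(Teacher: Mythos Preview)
Your proposal is correct and follows essentially the same route as the paper: decompose $\sqrt{n}(\tilde S_n - S_0)$ into the scale-estimation correction (handled by Theorem~\ref{th:1} and Corollary~\ref{cor:1}, giving zero in the off-diagonal) and the known-scale fluctuation of $\bar S_n$, then apply a delta-method argument exploiting that the derivative of the $\rho$-map vanishes in the diagonal direction at equal marginal scales. The paper formalizes the last step as a separate generalized delta-method lemma (requiring only $O_P(1)$ on the components where the gradient vanishes and convergence in distribution on the remaining ones), and it cites the gradient computation from \citet{duerre:vogel:fried:2015} rather than rederiving $\partial g/\partial s_{11}|_{(1/2,s^\ast)}=0$.

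Two small remarks. First, your claimed joint limit $\sqrt{n}(\tilde S_n - S_0) \cid U + \Xi_2$ is not justified as stated, since joint convergence of the two pieces has not been shown; but you only use $O_P(1)$ for the diagonal part and the marginal limit for the off-diagonal part, both of which follow without joint convergence (the former from summing two $O_P(1)$ terms, the latter via Slutsky from $\sqrt{n}(\tilde s_{12}-\bar s_{12}) \to 0$ in probability). Second, the paper does not single out the $\rho=0$ boundary case; the gradient formula it quotes is continuous there, so differentiability of the $\rho$-map at $(1/2,0)$ is implicitly taken from \citet{duerre:vogel:fried:2015}.
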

We have the following remarks about Theorem \ref{th:2}.
\begin{remark} \mbox{ \\ } \label{rem:2}
\begin{enumerate}[(I)]
\item
Comparing (\ref{eq:asv.cor2}) to (\ref{eq:asv.cor1}), we find that, at any elliptical distribution, the spatial sign correlation with the margins being standardized beforehand by the \emph{true} scales and the spatial sign correlation with the margins being standardized by \emph{estimated} scales have the same asymptotic efficiency. In fact, we show in the Appendix that they are asymptotically equivalent. In other words, the loss for not knowing the scale is nil asymptotically, and this is true regardless of the scale estimator used. Any scale function $\sigma(\cdot)$ satisfying (\ref{eq:scale.measure}) yields that that $X_0 = A X$ has equal marginal scales if $X$ is elliptical. Also, the finite-sample variances of the spatial sign correlation with known and estimated scales hardly differ, as the simulations in \citet{duerre:vogel:fried:2015} indicate.
\item
At elliptical distributions with finite fourth moments, the asymptotic variance of the product moment correlation is $(1+\kappa/3)(1-\rho^2)^2$, where $\kappa$ is the marginal excess kurtosis. Thus under normality, where $\kappa = 0$, the additional term  
$(1-\rho^2)^{3/2}$ may be viewed as the price to pay efficiency-wise for the gain in robustness when using the spatial sign correlation instead of the moment correlation.
\item
In case of a two-dimensional elliptical distribution, Condition (C\ref{c1}) is fulfilled if $g(z) = O(z^{-1/4+\delta})$ as $z \to 0$ for some $\delta > 0$.
\end{enumerate}
\end{remark}
The asymptotic distribution of $\hat\rho_{\sigma,n}$ only depends on $\rho$, but not on the elliptical generator $g$ or any other characteristic of the population distribution. Therefore the two-stage spatial sign correlation is very well suited for nonparametric and robust correlation testing. Likewise to Fisher's $z$-transformation for the moment correlation under normality \citep{fisher:1921, hotelling:1953}, one can find a variance-stabilizing transformation for the spatial sign correlation under ellipticity.
\begin{corollary} \label{cor:z-trafo}
Under the conditions of Theorem \ref{th:2}, we have $\sqrt{n} \{ h(\hat{\rho}_{\sigma,n})-h(\rho) \} \cid N(0,1)$ 
with
\begin{align*}
	h(x)=s(x)\left(\frac{1}{\sqrt{2}}\arcsin\left(\frac{3(1-\sqrt{1-x^2})-2}{\sqrt{1-x^2}+1}\right)+\frac{\pi}{2^{3/2}}\right),
\end{align*}
where $s(\cdot)$ denotes the (in this case univariate) sign function. 
\end{corollary}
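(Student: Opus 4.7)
The plan is a direct application of the delta method to Theorem \ref{th:2}. Since $\sqrt{n}(\hat\rho_{\sigma,n}-\rho)\cid N(0,v(\rho))$ with
\[
v(\rho) \ = \ (1-\rho^2)^2 + (1-\rho^2)^{3/2} \ = \ (1-\rho^2)^{3/2}\bigl(1+\sqrt{1-\rho^2}\,\bigr),
\]
any variance-stabilizing transformation must satisfy $h'(\rho) = 1/\sqrt{v(\rho)} = \{(1-\rho^2)^{3/4}\sqrt{1+\sqrt{1-\rho^2}}\}^{-1}$. The whole task therefore reduces to verifying that the $h$ stated in the corollary is a continuously differentiable antiderivative of this function on $(-1,1)$ and is odd.

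First I would treat the case $\rho>0$. Setting $y=\sqrt{1-\rho^2}$, the argument of $\arcsin$ becomes $u=(1-3y)/(1+y)$, which maps $[0,1]$ bijectively into $[-1,1]$, so $h$ is well-defined. A short chain-rule computation gives $du/dy=-4/(1+y)^2$ and $dy/d\rho=-\rho/y$, hence $du/d\rho=4\rho/\{y(1+y)^2\}$. Using $(1+y)^2-(1-3y)^2=8y(1-y)$ yields $1-u^2=8y(1-y)/(1+y)^2$, and the identity $(1-y)(1+y)=\rho^2$ allows $\sqrt{y(1-y)}=\rho\sqrt{y}/\sqrt{1+y}$. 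Combining the factor $(1-u^2)^{-1/2}$ from differentiating $\arcsin$ with $du/d\rho$ and the prefactor $1/\sqrt{2}$ collapses everything to $h'(\rho)=\{y^{3/2}\sqrt{1+y}\}^{-1}=\{(1-\rho^2)^{3/4}\sqrt{1+\sqrt{1-\rho^2}}\}^{-1}$, which matches $1/\sqrt{v(\rho)}$ exactly.

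Next I would verify that the factor $s(\cdot)$ in the definition turns this into an odd $C^1$ function on all of $(-1,1)$, so that the delta method applies uniformly in $\rho$, including at $\rho=0$. At $\rho=0$ the arcsin argument equals $-1$ and $\arcsin(-1)=-\pi/2$ exactly cancels the additive constant $\pi/2^{3/2}$, so the bracketed expression vanishes and $h(0)=0$ regardless of how $s(0)$ is defined. Because this bracketed expression is an even function of $\rho$, multiplication by $s(\rho)$ produces a function that is odd; its two one-sided derivatives at $0$ both equal $1/\sqrt{2}=v(0)^{-1/2}$, so $h\in C^1(-1,1)$. The delta method then yields $\sqrt{n}\{h(\hat\rho_{\sigma,n})-h(\rho)\}\cid N(0,[h'(\rho)]^2 v(\rho))=N(0,1)$, as claimed.

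The only mildly delicate step is the algebraic simplification of $h'$, in particular recognizing that $(1+y)^2-(1-3y)^2$ factors cleanly as $8y(1-y)$ and that $(1-y)(1+y)=\rho^2$ absorbs the spurious $\rho$ from $du/d\rho$; everything else, including the continuity check at the origin, is routine bookkeeping.
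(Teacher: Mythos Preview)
Your proof is correct and follows essentially the same route as the paper: invoke the delta method on Theorem~\ref{th:2} and verify that the stated $h$ satisfies $h'(\rho)=\{(1-\rho^2)^2+(1-\rho^2)^{3/2}\}^{-1/2}$. The only difference is cosmetic: the paper explains how the antiderivative was \emph{discovered} (via the substitution $z=1-\sqrt{1-x^2}$ and the computer algebra system Maxima), whereas you \emph{verify} it by a direct chain-rule computation through $y=\sqrt{1-\rho^2}$; your treatment of the $C^1$ extension across $\rho=0$ is more explicit than the paper's, which simply asserts that $h$ is strictly increasing and odd.
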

\begin{figure}\label{fig:fisher}
\begin{center}
\includegraphics[width=0.42\textwidth]{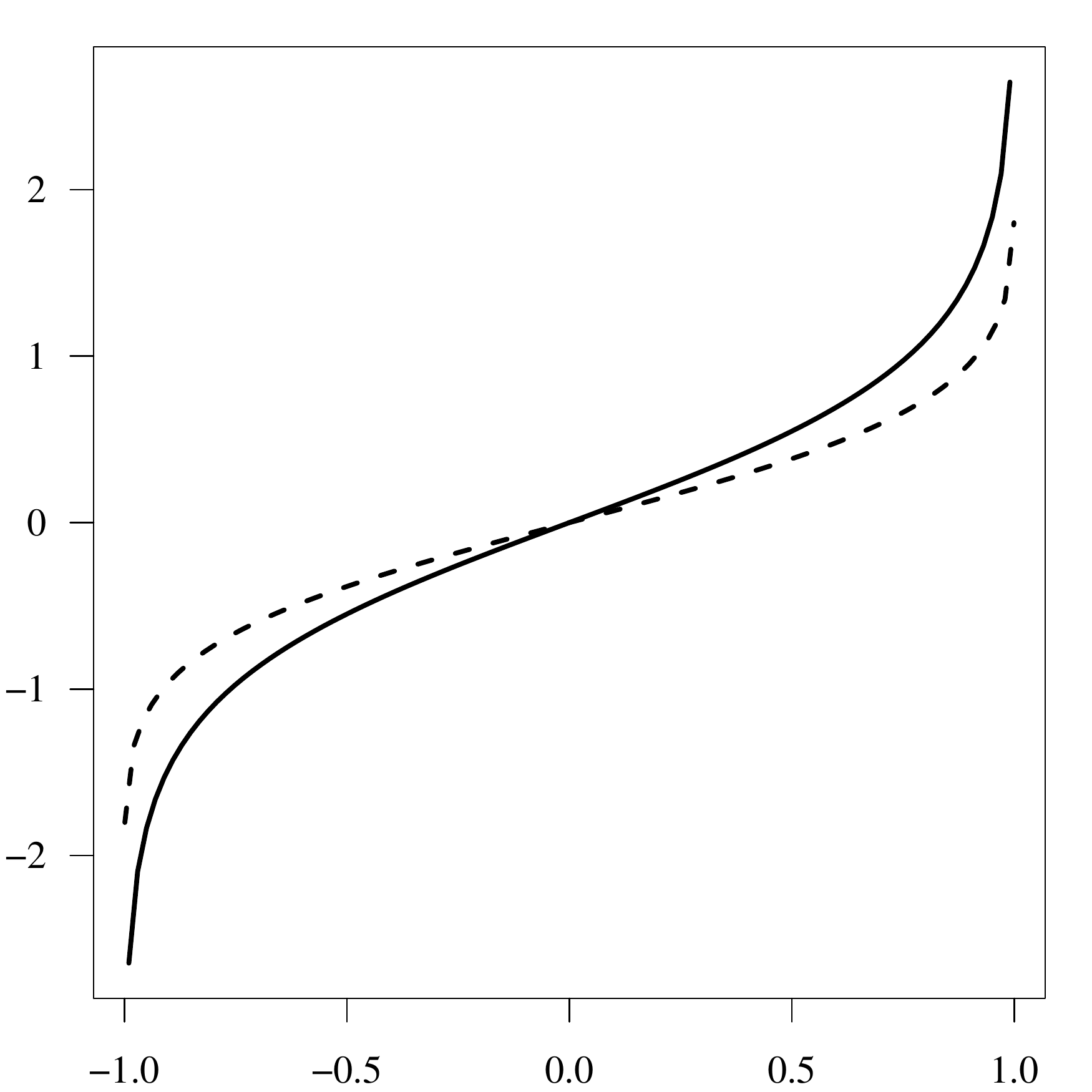}
\hfill
\includegraphics[width=0.42\textwidth]{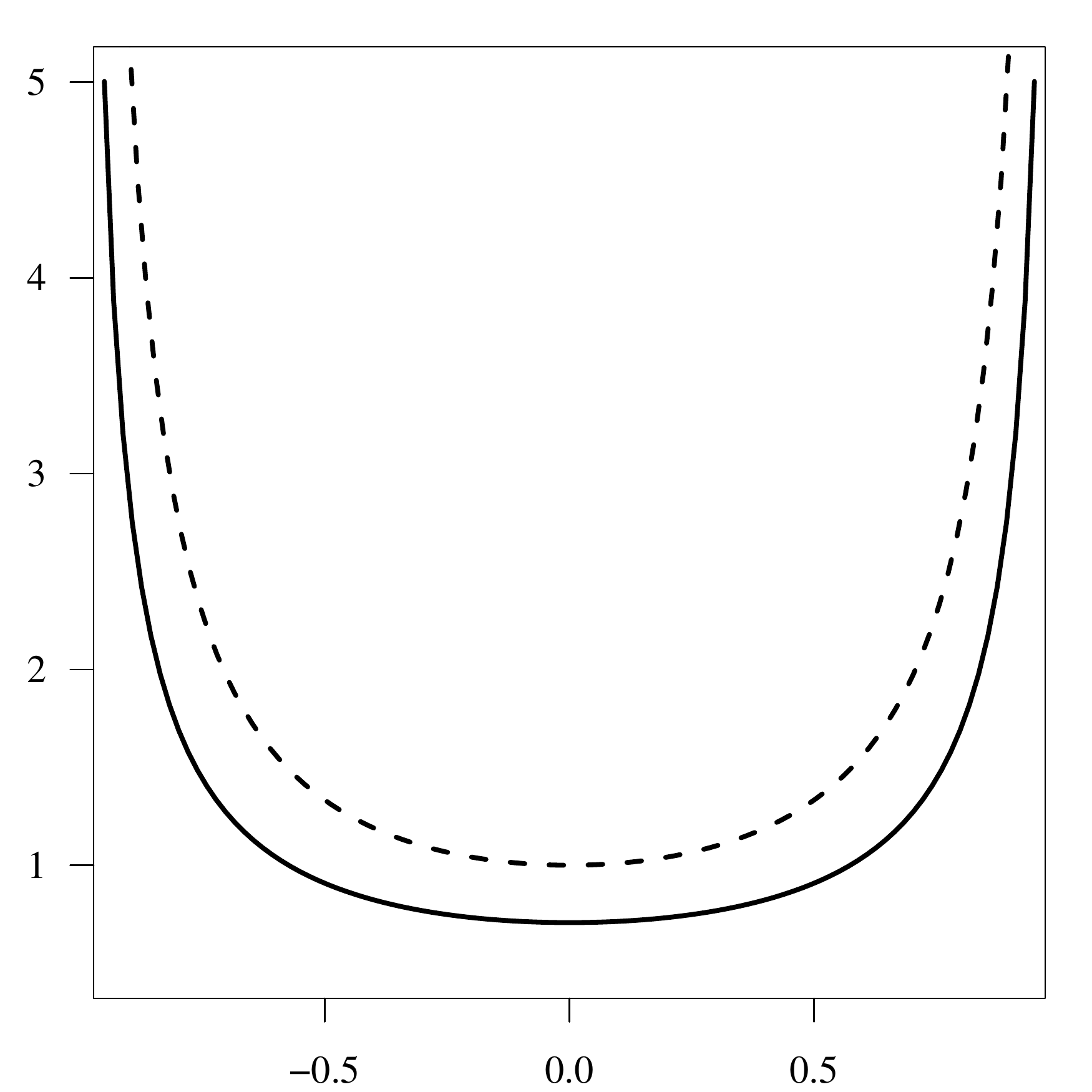}
\caption{Variance-stabilizing transformations (left) and their derivates (right) for the spatial sign correlation (solid) and the Pearson moment correlation, i.e., Fisher's $z$-transform (dashed). }
\end{center}
\end{figure}
As can be seen in Figure~\ref{fig:fisher}, the transformation $h$ is similar to Fisher's $z$-transform $x \mapsto \log\{(1+x)/(1-x)\}/2$. There are two main differences: first, $h$ is flatter, with a smaller derivative throughout, reflecting the larger asymptotic variance of the spatial sign correlation under normality, and second, $h$ is bounded, attaining only values between $-\pi/\sqrt{2}$ and $\pi/\sqrt{2}$. 
To construct confidence intervals, its inverse function $h^{-1}:[-\pi/\sqrt{2},\pi/\sqrt{2}]\rightarrow [-1,1]$ is also of interest. It is given by
\begin{align*}
	h^{-1}(y) = s(y)\frac{2^{3/2}\sqrt{1-\cos(\sqrt{2}y)}}{3-\cos(\sqrt{2}y)}.
\end{align*}  
Based on Corollary \ref{cor:z-trafo}, one can derive asymptotic level-$\alpha$-tests for the generalized correlation coefficient $\rho$ of a bivariate elliptical distribution, which are robust and very accurate also in small samples, as the results of Section \ref{sec:sim} below indicate.
For instance, a two-sided one-sample test for $\rho$ based on $\hat\rho_{\sigma,n}$ would reject the null hypothesis $\rho = \rho_0$ at the significance level $\alpha$ if the test statistic
\[
	T_{1,n} = n \{ h(\hat{\rho}_{\sigma,n}) - h(\rho_0)\}^2
\]
exceeds $\chi_{1;1-\alpha}^2$, i.e., the $1-\alpha$ quantile of the $\chi^2$ distribution with one degree of freedom. Likewise, for two samples of sizes $n_1$ and $n_2$ and generalized correlation coefficients $\rho^{(1)}$ and $\rho^{(2)}$, respectively, the null hypothesis $\rho^{(1)} = \rho^{(2)}$ is rejected if 
\[
	T_{2,n} = \frac{n_1 n_2}{n_1+n_2} \{ h(\hat{\rho}_{\sigma,n}^{(1)}) - h(\hat{\rho}_{\sigma,n}^{(2)})\}^2
\]
is larger than $\chi_{1;1-\alpha}^2$, where $\hat{\rho}_{\sigma,n}^{(i)}$, $i = 1,2$, denote the two-stage spatial sign correlations computed from the two samples. Similarly, one can construct one-sided and $k$-sample tests.


\section{Simulations}
\label{sec:sim}

We want to numerically investigate the usefulness of the asymptotics in finite samples. 
We compute 95\% confidence intervals based on the spatial sign correlation with and without the transformation $h$, denoted in the tables below by sscor-$h$ and sscor, respectively. 
The simulations are done with the statistical software R \citep{R}. 
We sample from bivariate elliptical distributions using the package mvtnorm \citep{mvtnorm}.
The central location is computed by the spatial median from the package pcaPP \citep{pcaPP}, and the scales are estimated by the $Q_n$ implemented in the package robustbase \citep{robustbase}.

Pearson's moment correlation with  and without Fisher's $z$-transform (denoted by cor-$z$ and cor, respectively) serves as a benchmark. Under ellipticity, the asymptotic variance of the moment correlation additionally depends on the kurtosis $\kappa$. We estimate the latter by the following multivariate kurtosis estimator 
\begin{align*}
	\hat\kappa_n 
	= \frac{3}{p(p+2)}\frac{1}{n}\sum_{i=1}^n \{ (X_i-\bar{X}_n)^T \hat\Sigma_n^{-1}(X_i-\bar{X}_n)\}^2  \ - \ 3,
\end{align*}
where $\bar{X}_n$ denotes the sample mean and $\hat\Sigma_n$ the sample covariance matrix \citep[e.g.][p.~103]{Anderson2003}. Alternatively, one may estimate the kurtosis by averaging the componentwise marginal sample kurtoses, as it is done, e.g., in \citet{Vogel2011}.

In Table \ref{tab:1}, covering frequencies of the generalized correlation coefficient $\rho$ by the various confidence intervals are given based on 10,000 repetitions for each parameter setting. We consider the normal distribution and the $t$-distribution with 5 and 3 degrees of freedom, true correlations of $\rho = 0$ and $\rho = 0.5$ and six different sample sizes ranging from $n = 10$ to $n = 10,000$.
We see that the sscor-$h$ confidence intervals, i.e., the spatial-sign-based with transformation $h$, are almost exact in all cases considered, already for $n = 10$. The spatial-sign-based confidence intervals without transformation reach a comparable accuracy only for $n = 50$, and confidence intervals based the Pearson correlation (with and without $z$-transformation) no sooner than $n = 100$ at normality and $n = 500$ at the $t_5$ distribution.
Table~\ref{tab:2} reports the corresponding average lengths of the confidence intervals multiplied by $\sqrt{n}$. Comparing these average lengths for the Pearson correlation and spatial sign correlation, we rediscover roughly the square root of the ratio of the asymptotic variances, e.g., for the normal distribution at $\rho = 0$, we have $5.54/3.92 = 1.413 \approx \sqrt{2}$.
At normality, the confidence intervals based on the Pearson correlation (the maximum likelihood estimator for $\rho$ in this case) are shorter, whereas the sscor confidence intervals are shorter at the $t_5$ distribution -- at least in larger samples, where all confidence intervals have the same 95\% covering probability. Thus, in a heavy-tailed setting like the $t_5$ distribution, the spatial sign based confidence intervals are superior -- in terms of covering accuracy as well as length.
Further, we observe that the strict asymptotic distribution-freeness of the spatial sign correlation practically also extends to the finite-sample case. In both tables, the results for the spatial sign correlation are essentially the same for the three different elliptical distributions. In contrast, the Pearson correlation shows a considerably worse finite-sample behavior at the $t_5$  than at the normal distribution.

The fourth moments of the $t_3$ distribution are not finite, i.e., the kurtosis does not exist, and the moment correlation is not $\sqrt{n}$-consistent when sampling from a $t_3$ distribution. Hence the usual construction of the moment correlation based confidence intervals has no mathematical justification. However, the bottom parts of Tables \ref{tab:1} and \ref{tab:2} indicate that, when ignoring this fact, Pearson's moment correlation nevertheless provides somewhat useful, approximate confidence intervals. While for small $n$ the moment-correlation-based confidence intervals are short but have a too low coverage probability, they reach 95\% in large samples, but are in comparison to, e.g., the sscor based confidence intervals very large.
This somewhat unexpected observation is not completely surprising, since the length of the confidence intervals is largely determined by the sample kurtosis. The slower convergence of the sample moment correlation to $\rho$, and the exploding behavior of the sample kurtosis are opposing effects, which appear to basically cancel each other.

Altogether the spatial correlation with variance stabilizing transformation $h$ yields very reliable confidence bands, which are accurate also in very small samples.

\begin{table}[t] 
\begin{center}
\begin{tabular}{r||cc|cc||cc|cc}
\hline
$\rho$ & \multicolumn{4}{c||}{0} & \multicolumn{4}{c}{0.5}\\
$n$ & sscor & sscor-$h$ &  cor & cor-$z$ & sscor & sscor-$h$ & cor & cor-$z$\\  
\hline
\hline
\multicolumn{9}{l}{ normal distribution  } \\
\hline
%
		10 &   86 &   94 &   77 &   83 &  87 &   93 &   78 &   83 \\
		20 &   90 &   94 &   86 &   89 &  91 &   95 &   87 &   90 \\
		50 &   93 &   95 &   92 &   93 &  93 &   95 &   92 &   93\\
	 100 &   93 &   95 &   93 &   94 &  94 &   95 &   93 &   94 \\
   500 &   95 &   95 &   95 &   95 &  95 &   95 &   95 &   95\\
 10000 &   95 &   95 &   95 &   95 &  95 &   95 &   95 &   95 \\
\hline
\hline
\multicolumn{9}{l}{ $t_5$ distribution  } \\
\hline
		10 &   85 &   94 &   70 &   76 &    87 &   93 &   71 &   77 \\
		20 &   90 &   95 &   81 &   85 &    90 &   95 &   80 &   85 \\
		50 &   93 &   95 &   88 &   90 &    93 &   95 &   88 &   90 \\
	 100 &   94 &   95 &   91 &   92 &    94 &   95 &   91 &   92 \\
	 500 &   95 &   95 &   94 &   94 &    95 &   95 &   94 &   94 \\
 10000 &   95 &   95 &   95 &   95 &    95 &   95 &   95 &   95 \\
\hline
\hline
\multicolumn{9}{l}{ $t_3$ distribution  } \\
\hline
		10 &  85 &  94 &  64 &  71 & 		  87 &  93 &  66 &  72 \\
		20 &  90 &  94 &  74 &  79 &  	  90 &  94 &  76 &  81 \\
		50 &  93 &  95 &  82 &  86 &  	  93 &  95 &  82 &  85 \\
	 100 &  94 &  95 &  86 &  88 &  	  94 &  95 &  86 &  88 \\
	 500 &  95 &  95 &  90 &  91 &  	  94 &  95 &  90 &  92 \\
 10000 &  95 &  95 &  94 &  95 &    	95 &  95 &  94 &  94 \\
\hline
\end{tabular}
\caption{  \label{tab:1} Empirical covering probabilities (\%) of asymptotic 95\% confidence intervals based on the spatial sign correlation (sscor) and the moment correlation (cor) with and without variance-stabilizing transformation for bivariate normal and $t$-distributions with 3 and 5 degrees of freedom, $\rho =0$ and $\rho = 0.5$, and varying sample sizes $n$; 10,000 repetitions.}
\end{center}
\end{table}


\begin{table}[t]
\begin{center}
\begin{tabular}{r||cc|cc||cc|cc}
\hline
$\rho$ & \multicolumn{4}{c||}{0} & \multicolumn{4}{c}{0.5}\\
$n$ & sscor & sscor-$h$ & cor & cor-$z$ & sscor & sscor-$h$ & cor & cor-$z$\\
\hline
\hline
\multicolumn{9}{l}{ normal distribution  } \\
\hline
		10 &  4.75 &  4.11 &  2.83 &  2.67 &    4.08 &  3.69 &  2.23 &  2.17 \\
		20 &  5.11 &  4.68 &  3.35 &  3.20 &   	4.20 &  3.99 &  2.57 &  2.53 \\
		50 &  5.36 &  5.15 &  3.68 &  3.60 &    4.26 &  4.18 &  2.79 &  2.77 \\
	 100 &  5.45 &  5.33 &  3.80 &  3.75 &    4.30 &  4.26 &  2.87 &  2.86 \\
	 500 &  5.52 &  5.50 &  3.89 &  3.88 &   	4.31 &  4.30 &  2.92 &  2.92 \\
 10000 &  5.54 &  5.54 &  3.92 &  3.92 &    4.32 &  4.32 &  2.94 &  2.94 \\
\hline
\hline
\multicolumn{9}{l}{ $t_5$ distribution  } \\
\hline
		10  &  4.75  &  4.12  &  2.84  &  2.68  &  	4.08  &  3.69  &  2.28  &  2.21  \\
		20  &  5.11  &  4.68  &  3.63  &  3.45  &  	4.22  &  4.01  &  2.83  &  2.77  \\
		50  &  5.36  &  5.15  &  4.44  &  4.30  &   4.28  &  4.20  &  3.40  &  3.36  \\
	 100  &  5.45  &  5.34  &  4.92  &  4.82  &  	4.30  &  4.26  &  3.74  &  3.71  \\
	 500  &  5.52  &  5.50  &  5.71  &  5.67  &   4.31  &  4.31  &  4.29  &  4.28  \\
 10000  &  5.54  &  5.54  &  6.38  &  6.38  &   4.32  &  4.31  &  4.79  &  4.79  \\
\hline
\hline
\multicolumn{9}{l}{ $t_3$ distribution  } \\
\hline
		10 &  4.73 &  4.10 &  2.81 	&  2.65 	&    	4.09 &  3.70 &  2.27 	&  2.21 \\
		20 &  5.11 &  4.68 &  3.77 	&  3.58 	&    	4.22 &  4.01 &  2.99 	&  2.92 \\
		50 &  5.36 &  5.15 &  5.08 	&  4.87 	&   	4.28 &  4.20 &  3.94 	&  3.87 \\
	 100 &  5.45 &  5.34 &  6.15 	&  5.95 	&    	4.30 &  4.26 &  4.72 	&  4.66 \\
	 500 &  5.52 &  5.50 &  9.12 	&  8.96 	&    	4.31 &  4.31 &  6.90 	&  6.85 \\
 10000 &  5.54 &  5.54 &  17.57 &  17.46 	&   	4.32 &  4.32 &  13.02 &  12.99 \\
\hline
\end{tabular}
\caption{  \label{tab:2} Average lengths of 95\% confidence intervals based on the spatial sign correlation (sscor) and the moment correlation (cor) with and without variance-stabilizing transformation for bivariate normal and $t$-distributions with 3 and 5 degrees of freedom, $\rho =0$ and $\rho = 0.5$, and varying sample sizes $n$; 10,000 repetitions.}
\end{center}
\end{table}


\section{Conclusion}

The spatial sign correlation, as introduced in \citet{duerre:vogel:fried:2015}, cf.~(\ref{eq:rho_n}), is a robust correlation estimator which has a variety of nice properties. It is fast to compute, it is distribution-free within the elliptical model, its efficiency is comparable to other estimators offering a similar degree of robustness, and the explicit form of the asymptotic variance facilitates inferential procedures. In this article we have addressed its main drawback: the inefficiency under strongly shaped models, i.e., where the eigenvalues of the shape matrix strongly differ. The shapedness due to different marginal scales may be eliminated by a componentwise standardization before computing the the spatial sign correlation. We have shown that the resulting two-step estimator has the same asymptotic distribution as the spatial sign correlation applied to a sample from a model with equal marginal scales.
An important consequence is that the parameter $a$, cf.~(\ref{eq:asv.cor1}), i.e., the ratio of the marginal scales, drops from the expression for the asymptotic variance. The only parameter left is the generalized correlation coefficient $\rho$ itself. This allows to devise a variance-stabilizing transformation  similar to Fisher's $z$-transformation, which, contrary to Fisher's transform, is valid for all elliptical distributions.
The prior standardization makes the spatial sign correlation really a practical estimator. 


\section*{Acknowledgments} This research was supported in part by the Collaborative Research Grant 823 of the German Research Foundation. Credit must be given to the anonymous referees of the article \emph{Spatial sign correlation} 
(J.\ Multivariate Anal.\ 135, pages 89--105, 2015), who independently of each other suggested to further explore the properties of two-stage spatial sign correlation, which initiated the research presented in this article.


\appendix
\section{Proofs}
In the proof of Theorem \ref{th:1}, we make use of the following lemma, which states that the empirical versions of $\Gamma_j$ with and without location estimation are asymptotically equivalent.
\begin{applem} \label{applem:1}
Let $t \in \R^p$ and $X$ be a $p$-variate random vector with distribution $F$ satisfying
\begin{enumerate}[(I)]
\item $E|X-t|^{-2/3} < \infty$. \label{num:applem1}
\end{enumerate}
Let further $\X_n = (X_1, \ldots, X_n)^T$ be an iid sample drawn from $F$ and $t_n$ a series of $p$-variate random vectors satisfying
\begin{enumerate}[(I)]
\setcounter{enumi}{1} 
\item \label{num:applem2} $\sqrt{n} (t_n - t) = O_P(1)$.
\end{enumerate}
Finally, let $A$ be a diagonal $p\times p$ matrix with positive diagonal entries. Then, for all $1 \le j \le p$,
\begin{align*}
	\hspace{-2.5em}
	\frac{1}{n}\sum_{i=1}^n 
	\left[ a_j^2 (X_i^{(j)}\!\!-\!t_n^{(j)})^2 \frac{A(X_i\!-\!t_n)(X_i\!-\!t_n)^TA}{\{(X_i\!-\!t_n)^T A^2 (X_i\!-\!t_n)\}^2}\right] -
	\frac{1}{n}\sum_{i=1}^n 
	\left[a_j^2 (X_i^{(j)}\!\!-\!t^{(j)})^2 \frac{A(X_i\!-\!t)(X_i\!-\!t)^TA}{\{(X_i\!-\!t)^T A^2 (X_i\!-\!t)\}^2} \right]
\end{align*}
converges to zero in probability as $n \to \infty$.
\end{applem}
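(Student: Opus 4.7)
My approach is to reduce the difference to a single sample-average increment of a bounded, zero-homogeneous function, bound this increment by a deterministic envelope that is integrable under the stated moment condition, and finish by Markov's inequality.

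First I would change variables: set $Y_i = A(X_i - t)$ and $\tau_n = A(t_n - t)$. Condition (II) yields $\sqrt{n}\,\tau_n = O_P(1)$, and condition (I) together with $\lambda_{\min}(A) > 0$ gives $E|Y|^{-2/3} < \infty$. A direct substitution shows that the expression in the lemma equals $n^{-1}\sum_{i=1}^n \{ g(Y_i - \tau_n) - g(Y_i) \}$, where $g(y) = (y^{(j)})^2 \, y y^T / |y|^4$. The key features of $g$ are that it is bounded (since $(y^{(j)})^2/|y|^2 \leq 1$ gives $\|g(y)\|_F \leq 1$), it is zero-homogeneous, and it is $C^1$ off the origin with $\|Dg(y)\| \leq C/|y|$ for some constant $C$; the gradient bound follows from zero-homogeneity, as $g$ is constant along rays and its derivative is purely tangential.

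Next I would build a pointwise envelope. If $|y| > 2|\tau|$, the segment $\{y - s\tau : s \in [0,1]\}$ stays at distance $>|y|/2$ from the origin, so the mean-value theorem yields $\|g(y-\tau) - g(y)\| \leq 2C|\tau|/|y|$; combining this with the trivial bound $\|g(y-\tau) - g(y)\| \leq 2$ gives, for any $a > 0$ and any $\tau$ with $|\tau| \leq a$,
\[
\|g(y-\tau) - g(y)\| \;\leq\; 2\min\!\left(1,\ \frac{2Ca}{|y|}\right) \;\leq\; 2(2Ca)^{2/3}\,|y|^{-2/3},
\]
using the elementary inequality $\min(1,u) \leq u^{2/3}$ valid for all $u > 0$. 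Taking $a = M/\sqrt{n}$ and using monotonicity of the envelope in $|\tau|$,
\[
\sup_{|\tau|\leq M/\sqrt{n}} \left\| \frac{1}{n}\sum_{i=1}^n \{g(Y_i - \tau) - g(Y_i)\} \right\|
\;\leq\; \frac{2(2CM)^{2/3}}{n^{1/3}} \cdot \frac{1}{n}\sum_{i=1}^n |Y_i|^{-2/3}.
\]

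Since $E|Y|^{-2/3} < \infty$, the sample mean on the right converges almost surely to a finite constant, so the left-hand side is $O_P(n^{-1/3}) = o_P(1)$. Given $\eta > 0$, I choose $M$ so that $P(\sqrt{n}|\tau_n| > M) < \eta$ for all large $n$; on the complementary event $|\tau_n|\leq M/\sqrt{n}$, and the displayed supremum bound applies at $\tau = \tau_n$, completing the proof. The main obstacle is the singularity of $Dg$ near the origin, which rules out a direct mean-value argument applied pointwise to each $Y_i$; the moment hypothesis (I) is calibrated precisely so that the envelope inequality $\min(1,u)\leq u^{2/3}$ produces an integrable bound, the exponent $2/3$ in the hypothesis being exactly what matches the polynomial decay needed to render the sum $o_P(1)$.
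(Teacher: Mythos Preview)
Your proof is correct and takes a genuinely different route from the paper's. Both arguments hinge on the same dichotomy---$|y|>2|\tau|$ versus $|y|\le 2|\tau|$---but package it differently. The paper makes the split explicit via the random partition $B_n=\{x:|x-t_n|\ge\tfrac12|x|\}$ and its complement: on $B_n^C$ it invokes the crude bound $|K_i|\le 2$ together with a separate argument (imported from earlier work) that the fraction of indices in $B_n^C$ vanishes; on $B_n$ it performs an explicit four-term algebraic decomposition of the componentwise difference and bounds each term by a constant times $|t_n^{(j)}|/|X_i|$, finishing with the Marcinkiewicz--Zygmund strong law $n^{-3/2}\sum_i |X_i|^{-1}\to 0$, which holds under $E|X|^{-2/3}<\infty$. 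You instead abstract everything into the single zero-homogeneous map $g$, derive the pointwise envelope $\|g(y-\tau)-g(y)\|\le 2\min(1,2Ca/|y|)$, and interpolate via $\min(1,u)\le u^{2/3}$ to reduce to the \emph{ordinary} strong law for $|Y_i|^{-2/3}$. Your approach is shorter, self-contained, and more portable (it applies verbatim to any bounded zero-homogeneous $C^1$ function in place of $g$), and it sidesteps the less common Marcinkiewicz law; the paper's approach is more concrete and makes the algebraic origin of each contribution visible. One cosmetic point: your intermediate inequality $\|g(y-\tau)-g(y)\|\le 2\min(1,2Ca/|y|)$ in the regime $|y|\le 2|\tau|\le 2a$ tacitly needs $C\ge 1$ to dominate the trivial bound $2$; simply replace $C$ by $\max(C,1)$ throughout, which leaves the gradient estimate intact.
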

\begin{proof}
To shorten notation and without loss of generality we will assume that $t=0$ and $A = I_p$. We will show componentwise convergence, i.e.
\be \label{eq:componentwise}
	\frac{1}{n}\sum_{i=1}^n \left[ 
			\frac{(X_i^{(j)}-t_n^{(j)})^2(X_i-t_n)^{(k)}(X_i-t_n)^{(l)}}{|X_i-t_n|^4}
		- \frac{(X_i^{(j)})^2 X_i^{(k)}X_i^{(l)} }{|X_i|^4}
	\right] \cip 0
\ee
as $n \to \infty$ for all $1 \le j, k, l \le p$.
We use the following random partition of $\R^p$:
\begin{align}\label{eq:partition}
	B_n = \{x\in \R^p|~|x-t_n|\geq \frac{1}{2}|x|\}, \qquad 
	B_n^C=\{x\in \R^p|~|x-t_n|< \frac{1}{2}|x|\}.
\end{align}
and the corresponding random partition of the index set $\{ 1, \ldots, n \}$:
\begin{align*}
	I_n=\{1\leq i \leq n|X_i\in B_n\}, \qquad
	I_n^C=\{1\leq i \leq n|X_i\in B_n^C\}.
\end{align*}
Letting $K_i$ denote the summands in (\ref{eq:componentwise}), we write
\begin{align}\label{eq:decomposition} 
		\frac{1}{n} \sum_{i=1}^n K_i
	=	\frac{1}{n}\sum_{i\in I_n}K_i
	+ \frac{1}{n}\sum_{i\in I_n^C}K_i.
\end{align}
For the second sum on the right-hand side of (\ref{eq:decomposition}) we make use of $|K_i| \le 2$ to obtain
$| n^{-1}\sum_{i \in I_n^C}K_i| \leq 2 n^{-1} \sum_{i=1}^n \mathds{1}_{B_n^C}(X_i)$. The right-hand side of the last inequality is shown to converge to zero in probability under the assumptions of Lemma A\ref{applem:1} as in the proof of Theorem 1 in \citet[][]{duerre:vogel:tyler:2014}. The first sum on the right-hand side of (\ref{eq:decomposition}) is decomposed into
\[
	\frac{1}{n}\sum_{i\in I_n} K_i \ = \ 
	\frac{1}{n}\sum_{i\in I_n}
		\frac{ \{ (X_i^{(j)}-t_n^{(j)})^2-(X_i^{(j)})^2\} (X_i-t_n)^{(k)}(X_i-t_n)^{(l)}|X_i|^4}{|X_i-t_n|^4|X_i|^4}
\]\[
	\quad + \  \frac{1}{n}\sum_{i\in I_n}
		\frac{(X_i^{(j)})^2 t_n^{(k)} (X_i-t_n)^{(l)}|X_i|^4}{|X_i-t_n|^4|X_i|^4}
	\ + \ \frac{1}{n}\sum_{i\in I_n}
	 \frac{(X_i^{(j)})^2 X_i^{(k)} t_n^{(l)} |X_i|^4}{|X_i-t_n|^4|X_i|^4}
\]\[
	\quad + \ \frac{1}{n}\sum_{i\in I_n}\frac{(X_i^{(j)})^2X_i^{(k)}X_i^{(l)}(|X_i|^4-|X_i-t_n|^4)}{|X_i-t_n|^4|X_i|^4}\label{G4}.
\]
Call the four terms from left to right $\calT_1$, $\calT_2$, $\calT_3$, $\calT_4$. Since $X_i \in B_n$ implies $|X_i|  \le 2 |X_i - t_n|$, we have
\[
	|\calT_1| \le \frac{1}{n}\sum_{i\in I_n}
	\left\vert 
		\frac{ t_n^{(j)} (t_n^{(j)} - 2 X_i^{(j)}) (X_i-t_n)^{(k)}(X_i-t_n)^{(l)}|X_i|^4}{|X_i-t_n|^4|X_i|^4}
	\right\vert
\]\[
	\quad \le
	\frac{1}{n}\sum_{i\in I_n}
	\left\vert \frac{ t_n^{(j)} (t_n^{(j)} - X_i^{(j)}) }{|X_i-t_n|^2} \right\vert
	+
		\frac{1}{n}\sum_{i\in I_n}
	\left\vert \frac{ t_n^{(j)}  X_i^{(j)} }{|X_i-t_n|^2} \right\vert
\]\[
	\quad \le
	\frac{2}{n}\sum_{i\in I_n} 	\frac{ | t_n^{(j)}|  }{|X_i|}  
	+
		\frac{4}{n}\sum_{i\in I_n} 	\frac{ | t_n^{(j)}|  }{|X_i|} 
	\ \le \
		\frac{6}{n} \sum_{i=1}^n 	\frac{ | t_n^{(j)}|  }{|X_i|} 
	\ = \ 
	6 \sqrt{n} | t_n^{(j)}| \left\{ \frac{1}{n^{3/2}} \sum_{i=1}^n \frac{1}{|X_i|}\right\} \cip 0,
\]
since the term in $\{\cdot\}$ converges to zero almost surely by Marczinkiewicz's law of large numbers \citep[p.~255]{loeve:1977}.
Convergence to zero of the remaining terms $\calT_2$, $\calT_3$ and $\calT_4$ is shown analogously. The proof of Lemma \ref{applem:1} is complete. 
\end{proof}
Remark: One can see from the last displayed line that, similarly to Theorem 1 of \citet{duerre:vogel:tyler:2014}, the lemma can be proven also under slightly different conditions. For instance, assumption (\ref{num:applem2}) can weakened to $t_n \cip t$ in exchange for the stronger moment condition $E|X-t|^{-1} < \infty$.
%
%
%
%

We are now ready to prove Theorem \ref{th:1}.
\begin{proof}[Proof of Theorem \ref{th:1}]
Let $\tilde{t}_n(\X_n) = A_n^{-1} t_n(\X_n A_n)$%
\footnote{Technically, $\tilde{t}_n$ is a function of $\X_n$ as well as $A_n$. We can understand $\tilde{t}_n(\X_n)$ as a short-hand notation, where the dependence on $A_n$ is simply suppressed, but the notation is also justified in the sense that $A_n$ usually is a function of $\X_n$.}  
and write $\sqrt{n} \{ S_n(\X_n A_n, t_n(\cdot)) - S_n(\X_n A, A t) \}$ as
\[
	\frac{1}{\sqrt{n}} \sum_{i=1}^n \left[
		\frac{ \{A_n X_i - A_n \tilde{t}_n(\X_n)\} \{A_n X_i - A_n \tilde{t}_n(\X_n)\}^T }{ \{A_n X_i - A_n \tilde{t}_n(\X_n)\}^T \{A_n X_i - A_n \tilde{t}_n(\X_n)\} }
		-
		\frac{ \{A X_i - A \tilde{t}_n(\X_n)\} \{A X_i - A \tilde{t}_n(\X_n)\}^T }{ \{A X_i - A \tilde{t}_n(\X_n)\}^T \{A X_i - A \tilde{t}_n(\X_n)\} }
	\right]
\]
\[
	+ \ \frac{1}{\sqrt{n}} \sum_{i=1}^n \left[
		\frac{  \{A X_i - A \tilde{t}_n(\X_n)\} \{A X_i - A \tilde{t}_n(\X_n)\}^T }{ \{A X_i - A \tilde{t}_n(\X_n)\}^T \{A X_i - A \tilde{t}_n(\X_n)\} } 
		-
		\frac{ \{A X_i - A t \} \{A X_i - A t \}^T }{ \{A X_i - A t \}^T \{A X_i - A t \} } 
	\right].
\]
Call the first term $\mathcal{T}_1$ and the second $\mathcal{T}_2$. The convergence of $\mathcal{T}_2$ to zero in probability follows with \citet[][Theorem~2]{duerre:vogel:tyler:2014}. Let $\tilde{X}_i = A X_i$, $\tau_n = A \tilde{t}_n$ and $\tau = A t$. Then Theorem 2 of \citet{duerre:vogel:tyler:2014}
 essentially states that $\sqrt{n} \{ S_n(\tilde{\X}_n, \tau_n) - S_n(\tilde{\X}_n, \tau) \} \cip 0$, where $\tilde\X_n = (\tilde{X}_1, \ldots, \tilde{X}_n)^T$. This is not stated explicitly in the text of the theorem, but this is what is proven. To check that the assumptions are met, note that by Conditions (C\ref{c3}), (C\ref{c4}) and (C\ref{c5}) we have
\[
	\sqrt{n} (\tau_n - \tau) = A \sqrt{n} ( A_n^{-1} t_n(\X_n A_n) - t )
\]
\[
 = \ A \sqrt{n} ( A_n^{-1} t_n(\X_n A_n) - t_n(\X_n) ) \ +\ A \sqrt{n} ( t_n(\X_n) - t ) 
\]
\[
	= \ \sqrt{n}  A A_n^{-1} ( t_n(\X_n A_n) - A_n t_n(\X_n) ) \ +\ A \sqrt{n} ( t_n(\X_n) - t ) \ =\ O_P(1).
\]
The latter is sufficient (along with continuity of $F$), cf.\ the remarks below Theorem~3 in \citet[][]{duerre:vogel:tyler:2014}.
We are thus left to prove $\mathcal{T}_1 \cid \Xi_p$. Let $Y_i = X_i - \tilde{t}_n(\X_n)$, where we suppress the dependence the on $n$ in this short-hand notation. Then $\mathcal{T}_1$ can be further decomposed into
\[
	\mathcal{T}_1 \ = \  
	\frac{1}{\sqrt{n}} \sum_{i=1}^n
	\frac{A_nY_iY_i^TA_n -AY_iY_i^TA}{Y_i^TA^2Y_i}
	\ + \
	\frac{1}{\sqrt{n}} \sum_{i=1}^n\frac{ \{Y_i^T(A^2-A_n^2)Y_i\} \, A_nY_iY_i^TA_n}{Y_i^TA_n^2Y_iY_i^TA^2Y_i}.
\]
We call the terms $\mathcal{T}_{1,a}$ and $\mathcal{T}_{1,b}$, where we have 
\[
	\mathcal{T}_{1,a} \ = \ 
	A_n A^{-1}\left(\frac{1}{n}\sum_{i=1}^n \frac{AY_iY_i^TA}{Y_i^TA^2Y_i}\right)A^{-1}\sqrt{n}(A_n-A)
	+	
	\sqrt{n}(A_n-A)A^{-1}\left(\frac{1}{n}\sum_{i=1}^n\frac{AY_iY_i^TA}{Y_i^TA^2Y_i}\right),
\]
which converges in distribution to $S(F_0,0)A^{-1}Z + Z A^{-1} S(F_0,0)$, since 
\[
	\frac{1}{n}\sum_{i=1}^n\frac{AY_iY_i^TA}{Y_i^TA^2Y_i} \cip S(F_0,0)
\]
by Theorem 1 in \citet{duerre:vogel:tyler:2014}. Writing $\mathcal{T}_{1,b}$ as $\mathcal{T}_{1,b} = \mathcal{L} + \mathcal{R}$ with
\[
	\mathcal{L} = \frac{1}{\sqrt{n}} 	\sum_{i=1}^n \frac{ 2 \{Y_i^T(A-A_n)A\} \, Y_iAY_iY_i^TA} {(Y_i^TA^2Y_i)^2},
\]\[
	\mathcal{R} 
	= \frac{1}{\sqrt{n}} 	\sum_{i=1}^n
	\left\{
			\frac{ \{ Y_i^T(A-A_n)(A+A_n) Y_i \}\, A_nY_iY_i^TA_n} {Y_i^TA_n^2Y_iY_i^TA^2Y_i} 
	-  2\frac{\{ Y_i^T(A-A_n)A Y_i \}\, AY_iY_i^TA}{(Y_i^TA^2Y_i)^2}
	\right\},
\]
we find for $\mathcal{L}$ by using Lemma A\ref{applem:1}
\[
	\mathcal{L} = 
	2 \sum_{j=1}^p \{A^{-1} \sqrt{n}(A-A_n) \}^{(j,j)}
	\frac{1}{n}\sum_{i=1}^n  \{(A Y_i)^{(j)}\}^2 \frac{AY_iY_i^TA} {(Y_i^TA^2Y_i)^2}
	\cid 
	- 2 \sum_{j=1}^p (A^{-1}Z)^{(j,j)}\Gamma_j.
\]
It remains to show that $\mathcal{R}$ vanishes asymptotically. We further decompose $\mathcal{R}$ into
\[
	\frac{1}{\sqrt{n}} 	\sum_{i=1}^n \left[
			\frac{ \{Y_i^T(A\!-\!A_n)(A\!+\!A_n)Y_i\}\, A_nY_iY_i^TA_n}{Y_i^TA_n^2Y_iY_i^TA^2Y_i}
		- \frac{ \{Y_i^T(A\!-\!A_n)(A\!+\!A_n)Y_i\}\, A_nY_iY_i^TA_n}{Y_i^TA^2Y_iY_i^TA^2Y_i}
	\right]
\]\[
	+ \ \frac{1}{\sqrt{n}} 	\sum_{i=1}^n \left[
			\frac{ \{ Y_i^T(A\!-\!A_n)(A\!+\!A_n)Y_i\}\,  A_nY_iY_i^TA_n}{Y_i^TA^2Y_iY_i^TA^2Y_i}
		- \frac{ \{ Y_i^T(A\!-\!A_n)2 A Y_i\}\, A_nY_iY_i^TA_n}{Y_i^TA^2Y_iY_i^TA^2Y_i}
		\right]
\]\[
		+ \	\frac{1}{\sqrt{n}} 	\sum_{i=1}^n 2 \left[
			\frac{\{Y_i^T(A\!-\!A_n)AY_i\}\, A_nY_iY_i^T A_n}{Y_i^TA^2Y_iY_i^TA^2Y_i} 
		- \frac{\{Y_i^T(A\!-\!A_n)AY_i\}\, AY_iY_i^TA_n}{Y_i^TA^2Y_iY_i^TA^2Y_i}
	\right]
\]\[
		+ \	\frac{1}{\sqrt{n}} 	\sum_{i=1}^n 2 \left[
			\frac{ \{ Y_i^T(A\!-\!A_n)AY_i\}\, AY_iY_i^TA_n}{Y_i^TA^2Y_iY_i^TA^2Y_i} 
		- \frac{ \{ Y_i^T(A\!-\!A_n)AY_i\}\, AY_iY_i^TA}{Y_i^TA^2Y_iY_i^TA^2Y_i}
	\right]
\]
and denote the four terms by $\mathcal{S}_1$, $\mathcal{S}_2$, $\mathcal{S}_3$ and $\mathcal{S}_4$, respectively. For $\mathcal{S}_1$ we get
\[
	| \mathcal{S}_1| 
	\le 
	\frac{1}{\sqrt{n}} 	\sum_{i=1}^n
	\left\vert 
		\frac{\{ Y_i^T(A-A_n)(A+A_n)Y_i\}^2 A_nY_iY_i^TA_n}{Y_i^TA_n^2Y_i(Y_i^TA^2Y_i)^2}
	\right\vert
\]\[
	\le 
	\frac{1}{\sqrt{n}} 	\sum_{i=1}^n
	\left\{
		\frac{ Y_i^T(A-A_n)(A+A_n)Y_i}{Y_i^TA^2Y_i}
	\right\}^2
\]\[
	= \frac{1}{\sqrt{n}} \sum_{j=1}^p \sum_{k=1}^p 
	\{ \sqrt{n}(A\!-\!A_n)(A\!+\!A_n) \}^{(j,j)} \{ \sqrt{n}(A\!-\!A_n)(A\!+\!A_n)\}^{(k,k)}
	\frac{1}{n} \sum_{i=1}^n \left( \frac{ Y_i^{(j)} Y_i^{(k)} }{Y_i^TA^2Y_i}	\right)^2,
\]
which converges to zero in probability. For $\mathcal{S}_2$, we obtain
\[
	\mathcal{S}_2 = 
	\frac{1}{\sqrt{n}}\sum_{i=1}^n\frac{ \{ Y_i^T(A-A_n)(A_n-A)Y_i \}\, A_nY_iY_i^TA_n}{Y_i^TA^2Y_iY_i^TA^2Y_i}
\]\[
	= 
	\frac{1}{\sqrt{n}}\sum_{j=1}^p
	- \{ a_j^{-1} \sqrt{n}(A_n-A)^{(j,j)}\}^2 \, 
	A_n A^{-1} \left( \frac{1}{n}\sum_{i=1}^n (a_j Y_i^{(j)})^2\frac{AY_iY_i^TA}{(Y_i^TA^2Y_i)^2} \right) A^{-1}A_n
\]\[
	\cid 0 \cdot \sum_{j=1}^p -(Z^{(j,j)}/a_j)^2 \Gamma_j,
\]
where we have again used Lemma A\ref{applem:1}. Similar calculations yield that $\mathcal{S}_3 = o_P(1)$ and $\mathcal{S}_4 = o_P(1)$ as $n \to \infty$. Note that, although we have treated $\mathcal{T}_{1,a}$ and $\mathcal{L}$ individually, they converge in fact jointly. Both are essentially linear functions of $\sqrt{n} (A_n - A)$. The proof of Theorem \ref{th:1} is complete.
\end{proof}
%
%
%
%

\begin{proof}[Proof of Corollary \ref{cor:1}]
As in Theorem \ref{th:1}, let $X_0 = A(X-t)$. Then $X_0 \sim F_0 \in \Ee_2(0,V_0)$. Since $V_0$ has equal diagonal elements, its eigenvalue decomposition is given by $V_0 = U \Lambda U^T$, where
\be \label{eq:eigen}
	 U	= \frac{1}{\sqrt{2}}
	\begin{pmatrix}
			1&1\\
		-1&1
	\end{pmatrix},
	\qquad 
	\Lambda =
	\begin{pmatrix}
		\lambda_1&0\\
		0&\lambda_2
	\end{pmatrix}
	= c
	\begin{pmatrix}
		1-\rho&0\\
		0&1+\rho
	\end{pmatrix}.
\ee
for some $c > 0$. Hence, by Proposition 1 of \citet{duerre:vogel:fried:2015}, we have
\[
	S(F_0,0)=
	\begin{pmatrix}
		1/2  & \delta \\
		\delta & 1/2
	\end{pmatrix}
\]
with $\delta = (1-\sqrt{1-\rho^2})/(2\rho)$ if $\rho \neq 0$ and $\delta = 0$ otherwise, and hence 
\be \label{eq:partI}
	A^{-1} Z S(F_0, 0) + S(F_0, 0) A^{-1} Z =
	\begin{pmatrix}
		Z_1/a_1 										& (Z_1/a_1 + Z_2/a_2) \delta \\
		(Z_1/a_1 + Z_2/a_2) \delta  & Z_2/a_2
	\end{pmatrix}.
\ee
To compute the remaining part $- 2 \sum_{j=1}^2  (Z_j/a_j) \Gamma_j$, we have to evaluate the integrals $\Gamma_j$, $j = 1,2$. Towards this end, we write $X_0 = U \Lambda^{1/2} Y$, where $U$ and $\Lambda$ are as in (\ref{eq:eigen}) and $Y$ has a spherical distribution, and consider the matrix
\[
	W = 
	E\left[
		\vec\left\{ \frac{X_0 X_0^T}{X_0^T X_0} \right\} \vec\left\{ \frac{X_0 X_0^T}{X_0^T X_0} \right\}^T
	\right]
\]\[
	\qquad =
	(U\otimes U) 
	E
	\left[
		\vec\left\{ \frac{ \Lambda^{1/2} Y Y^T \Lambda^{1/2} }{Y^T \Lambda Y} \right\} 
		\vec\left\{ \frac{ \Lambda^{1/2} Y Y^T \Lambda^{1/2} }{Y^T \Lambda Y} \right\}^T
	\right] 
	(U\otimes U)^T
\]
The expectation on the right-hand side is independent of the elliptical generator $g$ and is given as an explicit function of $\lambda_1$ and $\lambda_2$ in the proof of Proposition 2(3) in \citet{duerre:vogel:fried:2015}. Plugging in our specific forms of $\Lambda$ and $U$, cf.~(\ref{eq:eigen}), we obtain
\[
	W = 
	\begin{pmatrix}
	 \alpha & \beta  	& \beta  	& \gamma  \\
	 \beta  & \gamma  & \gamma  & \beta  \\
	 \beta  & \gamma  & \gamma  & \beta  \\
	 \gamma & \beta  	& \beta  	& \alpha 
	\end{pmatrix}
\]
with
\[
	\alpha = \frac{ \sqrt{1-\rho^2} + 2\rho^2 - 1}{4\rho^2}, 
	\quad
	\beta = \frac{1-\sqrt{1-\rho^2}}{4\rho} = \delta/2,
	\quad
	\gamma = \frac{1 - \sqrt{1-\rho^2}}{4\rho^2}
\]
if $\rho \neq 0$, and $\alpha = 3/8$, $\beta = 0$, $\gamma = 1/8$ if $\rho = 0$.
Since $W$ contains $\Gamma_1$ as upper diagonal block and $\Gamma_2$ as lower diagonal block, we obtain
\be \label{eq:partII}
	-2\left(\frac{Z_1}{a_1} \Gamma_1 + \frac{Z_2}{a_2} \Gamma_2 \right) = 
	-2 
	\begin{pmatrix}
		\frac{Z_1}{a_1}\alpha + \frac{Z_2}{a_2}\gamma 				& \left(\frac{Z_1}{a_1} + \frac{Z_2}{a_2}\right) \beta \\
		\left(\frac{Z_1}{a_1} + \frac{Z_2}{a_2}\right) \beta  & \frac{Z_1}{a_1}\gamma + \frac{Z_2}{a_2}\alpha
	\end{pmatrix}.
\ee
Putting (\ref{eq:partI}) and (\ref{eq:partII}) together, we finally arrive at
\[
	\Xi_2 = 
	\begin{pmatrix}
	Z_1/a_1 - Z_2/a_2 	&		0								\\
	0										&   Z_2/a_2 - Z_1/a_1 \\
	\end{pmatrix}
\] 
which completes the proof of Corollary \ref{cor:1}.
\end{proof}
%
%
%
%
%
%

For the proof of Theorem \ref{th:2} we require a slight generalization of the delta method.
\begin{applem}\label{applem:2}
Let $(U_n)_{n\in \N}$ be a series of $p$-dimensional random vectors and $(a_n)_{n\in \N}$ a sequence of real numbers such that $a_n\rightarrow \infty$ as $n \to \infty$ and
\begin{enumerate}[(I)]
\item 
	$a_n(U_n-u)=O_p(1)$ \ as $n \to \infty$ for some $u \in \R^p$. Let furthermore
\item 
	$h : \R^p\rightarrow \R$ be continuously differentiable at $u = (u_1, \ldots, u_p)^T$ with 
	$\frac{\partial h(u)}{\partial u_i} = 0$ for all $i \in I$ for some subset $I \subset \{1,\ldots,p\}$, and
\item 
	$a_n[U_n-u]_{I^C} \cid \Psi$, where $[U_n-u]_{I^C}$ denotes the random vector obtained from $U_n-u$ by deleting all components in $I$.
\end{enumerate}
Then $a_n(h(U_n)-h(u)) \cid  [h'(u)]_{I^C} \Psi$.
\end{applem}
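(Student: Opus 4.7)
The plan is to perform a first-order Taylor expansion of $h$ around $u$ and exploit the vanishing partial derivatives to reduce to the already-assumed partial limit law on $I^C$.

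First I would note that Condition (I) implies $U_n \cip u$ as $n \to \infty$. By continuous differentiability of $h$ at $u$, there exists a function $r$ with $r(v) \to 0$ as $v \to u$ such that
\begin{equation*}
	h(U_n) - h(u) \ = \ h'(u)^T (U_n - u) \ + \ r(U_n)\,|U_n - u|,
\end{equation*}
where $h'(u) \in \R^p$ denotes the gradient of $h$ at $u$.

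Next I would use the hypothesis that $\partial h(u)/\partial u_i = 0$ for every $i \in I$ to collapse the linear term onto the complementary coordinates, namely
\begin{equation*}
	h'(u)^T (U_n - u) \ = \ \sum_{i \in I^C} \frac{\partial h(u)}{\partial u_i} (U_n - u)^{(i)} \ = \ [h'(u)]_{I^C}^T \, [U_n - u]_{I^C}.
\end{equation*}
Multiplying the expansion by $a_n$ therefore gives
\begin{equation*}
	a_n \bigl( h(U_n) - h(u) \bigr) \ = \ [h'(u)]_{I^C}^T \, a_n [U_n - u]_{I^C} \ + \ r(U_n) \cdot a_n|U_n - u|.
\end{equation*}

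By assumption (III), the first summand converges in distribution to $[h'(u)]_{I^C}^T \Psi$. For the remainder term, assumption (I) yields $a_n|U_n-u| = O_P(1)$, while $U_n \cip u$ together with the continuous mapping theorem gives $r(U_n) \cip 0$. Hence the product is $o_P(1)$, and Slutsky's theorem delivers the claimed limit $[h'(u)]_{I^C}^T \Psi$. No serious obstacle arises — the only point requiring a bit of care is separating the gradient into its $I$- and $I^C$-blocks so that the unavailable marginal limit of $a_n[U_n-u]_I$ never enters the calculation, which is precisely what the vanishing partial derivatives on $I$ permit.
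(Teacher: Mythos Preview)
Your proof is correct and follows essentially the same route as the paper: both use the first-order Taylor expansion of $h$ at $u$, bound the remainder as $o_P(1)$ via the $O_P(1)$ assumption on $a_n(U_n-u)$ together with $U_n\cip u$, and then exploit the vanishing partial derivatives on $I$ so that only the $I^C$-block of $a_n(U_n-u)$ enters the linear term before applying Slutsky's lemma. The paper phrases the remainder control in $\varepsilon$--$\delta$ language rather than via an explicit remainder function $r$, but this is purely cosmetic.
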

If $I=\emptyset$, Lemma A\ref{applem:2} boils down to the usual delta method. 
If some components of $h'(u)$ are zero (which are gathered in the index set $I$), it suffices to ensure the joint convergence of the remaining components of $a_n(U_n-u)$ and the boundedness in probability of $a_n(U_n -u)$ to conclude the convergence of $a_n(h(U_n)-h(u))$.
\begin{proof}[Proof of Lemma \ref{applem:2}]
The proof is similar to the proof of Lemma 5.3.2. in \cite{Bickel2001}. 
Since $h$ is continuously differentiable, for every $\epsilon>0$ there exists a $\delta>0$ such that 
\begin{align}\label{ab1}
|u-v|\leq \delta \ \Rightarrow \ |h(v)-h(u)-h'(u)(v-u)|\leq \epsilon |v-u|.
\end{align}
Condition (I) implies that $U_n \cip u$, i.e., $P(|U_n-u|\leq \delta)\rightarrow 1$.
Thus using (\ref{ab1}), we have for every $\epsilon>0$ that
$P(|h(U_n)-h(u)- h'(u)(U_n-u)| \leq \epsilon |U_n-u|) \ \rightarrow \ 1$
which implies 
$a_n(h(U_n)-h(u)-h'(u) (U_n-u)) = o_p(|a_n(U_n-u)|) = o_p(1)$. The latter may be re-written as
\begin{align*}
a_n(h(U_n)-h(u))=a_n h'(u) (U_n-u)+o_p(1),
\end{align*}
and the result follows by Conditions (II) and (III) and Slutsky's lemma.
\end{proof}

%
%
%
%
%
%

\begin{proof}[Proof of Theorem \ref{th:2}]
We write
\[
		\sqrt{n}(\hat{\rho}_{\sigma,n}-\rho)
		=
		\sqrt{n}\left( \gamma \{ \vec S_n(\X_n A_n, t_n(\cdot)) \}  - \gamma\{ \vec S(F_0,0) \}   \right),
\]
where $F_0$ is, as in Theorem \ref{th:1}, the distribution of $X_0 = A(X-t)$, and $\gamma:\R^4 \to \R$ is the function that maps 
the (vectorized) two-dimensional spatial sign covariance matrix of an elliptical distribution to the corresponding generalized correlation coefficient. The function $\gamma$ is given by (\ref{eq:rho_n}). Its derivative $\gamma'$ is computed in the proof of Proposition 5 in \citet{duerre:vogel:fried:2015}. Since $F_0$ has equal marginal scales, i.e., $a = 1$, we have
\[
	\gamma'\{ \vec S(F_0,0) \} = 
	\begin{pmatrix}
			0	&	0	&  2 \sqrt{1-\rho^2}(1+ \sqrt{1-\rho^2}) &  0
\end{pmatrix}.
\]
We further decompose
\be \label{eq:srv} 
	\sqrt{n}\vec \left( S_n(\X_n A_n, t_n(\cdot)) -  S(F_0,0)  \right)
\ee\[
	= \ \sqrt{n}\vec \left( S_n(\X_n A_n, t_n(\cdot)) - S_n(\X_n A, A t) \right) 
	\ + \ \sqrt{n}\vec \left( S_n(\X_n A, A t)					- S(F_0,0) 				 \right),
\]
where we call the two terms on the right hand side $\mathcal{T}_1$ and $\mathcal{T}_2$. We deduce two things: First, 
\[
	\sqrt{n}\vec \left( S_n(\X_n A_n, t_n(\cdot)) -  S(F_0,0)  \right) = O_p(1)   \qquad \mbox{ as } n \to \infty,
\]
since $\mathcal{T}_1 \cid \Xi_2$ by Theorem \ref{th:1}, and $\mathcal{T}_2$ converges in distribution as a corollary of the central limit theorem (or as a special case of Proposition 2 in \citet{duerre:vogel:fried:2015}). Second, the third component of (\ref{eq:srv}) converges in distribution to the same limit as $\mathcal{T}_2^{(3)}$,
since $\mathcal{T}_1^{(3)}$ converges to zero in probability by Corollary \ref{cor:1}. Here we use $(\,\cdot\,)^{(3)}$ to denote the third component of a vector.
The asymptotic distribution of $\mathcal{T}_2$ is given by Proposition 2 in \citet{duerre:vogel:fried:2015}. Making use of the particular structure of $V_0$, i.e., equal diagonal elements, cf.~(\ref{eq:eigen}), we obtain  $\mathcal{T}_2^{(3)} \cid N(0,w)$ with $w = (\sqrt{1-\rho^2} + \rho^2 - 1)/(2\rho)^2$ if $\rho \neq 0$ and $w = 1/8$ if $\rho = 0$. Applying Lemma A\ref{applem:2}
with $\gamma$ in the role of $h$, and $I^C = \{3\}$, we obtain
\[
		\sqrt{n}(\hat{\rho}_{\sigma,n}-\rho)
		=	
		[\gamma'\{ \vec S(F_0,0) \}]_{(1,3)} \cdot N(0,w) 
		=
		N(0, (1-\rho^2)^2+(1-\rho^2)^{3/2}).
\]
Note that $\gamma'(\cdot)$ is a $1\times 4$ matrix.
The proof of Theorem \ref{th:2} is complete.
\end{proof}

%
%
%
%
%
%

\begin{proof}[Proof of Corollary \ref{cor:z-trafo}]
By the delta method, the function $h$ has to satisfy
\begin{align}\label{eq:derivative}
	|h'(x)|=\{ (1-x^2)^2+(1-x^2)^{3/2}\}^{-1/2}.
\end{align}
The function $h$ given in Corollary \ref{cor:z-trafo} fulfills this requirement and is further strictly increasing and odd.
To find the antiderivative of (\ref{eq:derivative}), we have used the compute algebra system \citet{maxima}.
Substituting $z=1-\sqrt{1-x^2}$ yields the integral 
$\int \{\sqrt{(1-z)z}(2-z)\}^{-1} dz$, for which Maxima gives the primite $2^{-1/2} \arcsin ( (3z-2)/|z-2| )$.
\end{proof}


{\small

}

\end{document}